\documentclass[letterpaper, 10 pt, conference]{ieeeconf}  % Comment this line out if you need a4paper

\IEEEoverridecommandlockouts                              % This command is only needed if 
                                                          % you want to use the \thanks command

\overrideIEEEmargins                                      % Needed to meet printer requirements.

\usepackage{amsfonts}
\usepackage{amsmath}
\usepackage{subfig}
\usepackage{graphicx}
\usepackage{amssymb}
\usepackage{mathtools}
\usepackage{xcolor}
\usepackage{caption}
\usepackage{float}
\usepackage[ruled,vlined]{algorithm2e}
\usepackage{stmaryrd}

\newcommand{\col}{\textsf{ col}}
\newcommand{\norm}[1]{\left\lVert#1\right\rVert}

\newcommand{\R}{{\mathbb{R}}}

\newcommand{\Ce}{{\mathcal{C}}}
\newcommand{\N}{{\mathbb{N}}}

\newcommand{\e}{\mathsf{e}}

\newcommand{\X}{{\mathbf{X}}}
\newcommand{\T}{{\mathbf{T}}}
\newcommand{\Obs}{{\mathcal{U}}}
\newcommand{\U}{{\mathbf{U}}}
\newcommand{\sign}{{\mathsf{sign}}}

\newtheorem{theorem}{Theorem}[section]

\newtheorem{assumption}{Assumption}
\newtheorem{corollary}[theorem]{Corollary}

\newtheorem{lemma}[theorem]{Lemma}
\newtheorem{remark}[theorem]{Remark}
\newtheorem{problem}[theorem]{Problem}

\title{\LARGE \bf
Funnel-based Control for Reach-Avoid-Stay Specifications
}
\author{Ratnangshu Das and Pushpak Jagtap% <-this % stops a space
\thanks{*This work was not supported by any organization}% <-this % stops a space
\thanks{R. Das and P. Jagtap are with Robert Bosch Centre for Cyber-Physical Systems, IISc, Bangalore, India {\tt\small \{ratnangshud,pushpak\}@iisc.ac.in}}%
}

\begin{document}

\maketitle
\thispagestyle{empty}
\pagestyle{empty}

%%%%%%%%%%%%%%%%%%%%%%%%%%%%%%%%%%%%%%%%%%%%%%%%%%%%%%%%%%%%%%%%%%%%%%%%%%%%%%%%
\begin{abstract}

The paper addresses the problem of controller synthesis for control-affine nonlinear systems to meet reach-avoid-stay specifications. Specifically, the goal of the research is to obtain a closed-form control law ensuring that the trajectories of the nonlinear system, reach a target set while avoiding all unsafe regions and adhering to the state-space constraints. To tackle this problem, we leverage the concept of the funnel-based control approach. Given an arbitrary unsafe region, we introduce a circumvent function that guarantees the system trajectory to steer clear of that region. Subsequently, an adaptive funnel framework is proposed based on the target, followed by the construction of a closed-form controller using the established funnel function, enforcing the reach-avoid-stay specifications. To demonstrate the efficacy of the proposed funnel-based control approach, a series of simulation experiments have been carried out.

\end{abstract}

%%%%%%%%%%%%%%%%%%%%%%%%%%%%%%%%%%%%%%%%%%%%%%%%%%%%%%%%%%%%%%%%%%%%%%%%%%%%%%%%

%%%%%%%%%%%%%%%%%%%%%%%%%%%%%%%%%%%%%%%%%%%%%%%%%%%%%%%%%%%%%%%%%%%%%%%%%%%%%%%%
%%--------------------------------NEW SECTION---------------------------------%%
%%%%%%%%%%%%%%%%%%%%%%%%%%%%%%%%%%%%%%%%%%%%%%%%%%%%%%%%%%%%%%%%%%%%%%%%%%%%%%%%
\section{Introduction}

In recent years, there has been significant interest in the study of reach-avoid-stay (RAS) specifications for the safe and reliable operation of autonomous systems. Essentially the system state trajectory should eventually reach a target set while avoiding any unsafe set and respecting state space constraints. Synthesizing controllers for these RAS specifications is an important class of control problem as they serve as building blocks for complex task specifications 
\cite{Kloetzer} and enable the design of robust control strategies in safety-critical control problems such as trajectory regulation, motion planning, and obstacle avoidance. 

With the onset of the usage of formal languages for specifying complex tasks, symbolic control \cite{tabuda, FRR} has emerged as a powerful tool. 
%SCOTS \cite{SCOTS} has gained significant attention in the research community for its ability to leverage abstraction techniques to create a symbolic model of the system dynamics and efficiently compute control strategies with formal guarantees. 
\cite{FPA_RAS} proposed a fixed-point algorithm as a computational improvement over the abstraction-based methods for control synthesis in a reach-stay scenario. The authors in \cite{Local_Global} presented a scalable controller synthesis technique by leveraging the concept of barrier functions in symbolic control. In spite of all these attempts at improving computational efficiency, these approaches still face challenges related to the so-called curse of dimensionality.

In contrast to formal methods, nonlinear control approaches like barrier-based control \cite{CBF} ensure formal guarantees of safety and stability without the need for state-space discretization. Authors in \cite{Meng1} proposed implementing control Lyapunov-barrier functions to establish sufficient conditions for reach-avoid-stay specifications, specifically in the context of a system experiencing a Hopf-bifurcation. In \cite{Meng2}, researchers present a stochastic analog of Lyapunov-barrier functions to characterize probabilistic reach-avoid-stay specifications, taking robustness into account. However, although these methods provide more efficient control synthesis, the reliance on optimization techniques can still lead to increased computational complexity, making barrier function-based methods computationally demanding, especially for large and high-dimensional systems.

On the other hand, the funnel-based control approach \cite{PPC1} offers the distinct advantage of designing a closed-loop control scheme satisfying a required tracking performance. Owing to the computationally tractable nature of funnel-based control, numerous successful applications have been reported in the literature \cite{surveyPPC}. From solving tracking control problems for unknown nonlinear systems \cite{PPC_Unknown} to handling multi-agent systems subjected to complex task specifications \cite{Funnel_STL_MAS}, researchers have demonstrated its efficacy in a wide range of control problems. Moreover, as the feedback control algorithm actively adjusts the system's trajectory to guide it towards the target, it has been effective in enforcing reachability specifications, i.e., reaching a target while respecting state space constraints \cite{hard_soft, NAHS}.

However, active obstacle avoidance using funnel-based control can be a challenging problem. One of the main difficulties lies in designing accurate and efficient funnel representations to ensure safe navigation around obstacles while maintaining reach-avoid-stay specifications. In \cite{RB}, authors consider a pre-established trajectory around the obstacles and redefine the problem as implementing control funnel functions for path following. 
Although this approach ensures that the system remains in a safe region around the reference trajectory, it fails to utilize the inherent ability of funnel constraints to avoid obstacles.

This paper puts forward, for the very first time, a novel approach to integrate avoid-specifications within the funnel-based control framework. By adapting the funnel constraints, the closed-form control law dynamically adjusts the robot's trajectory to avoid any general unsafe set while maintaining the desired performance criteria. The effectiveness of this approach in satisfying reach-avoid specifications is further demonstrated through simulation studies, highlighting its potential to enhance the capabilities of robotic systems in navigating complex environments. 

%The remainder of this paper is structured as follows. In Section \ref{sec:prob}, we introduce the system dynamics and formulate the problem statement considered in this work. Section \ref{sec:reach} elaborates on implementing a funnel-based control approach to solve reachability specifications. In Section \ref{sec:avoid}, we introduce the circumvent function as a way to incorporate avoid specifications within the funnel constraint design. Section \ref{sec:control} presents the design of a closed-form control policy by utilizing the results obtained in Section \ref{sec:avoid} to enforce reach-avoid specifications. In Section \ref{sec:multiobs}, we put forward an algorithm to address avoiding general concave and disconnected unsafe sets using the control law established in \ref{sec:control}. Finally, in Section \ref{sec:sim}, the effectiveness of this approach is demonstrated through simulation studies, highlighting its potential to enhance the capabilities of robotic systems in navigating complex environments. 

%%%%%%%%%%%%%%%%%%%%%%%%%%%%%%%%%%%%%%%%%%%%%%%%%%%%%%%%%%%%%%%%%%%%%%%%%%%%%%%%
%%--------------------------------NEW SECTION---------------------------------%%
%%%%%%%%%%%%%%%%%%%%%%%%%%%%%%%%%%%%%%%%%%%%%%%%%%%%%%%%%%%%%%%%%%%%%%%%%%%%%%%%
\section{Preliminaries and Problem Formulation} \label{sec:prob}

\subsection{Notations}
The symbols $\N$, $ \R$, $\R^+$, and $\R_0^+ $ denote the set of natural, real, positive real, and nonnegative real numbers, respectively. 
We use $ \R^{n\times m} $ to denote a vector space of real matrices with $ n $ rows and $ m $ columns. To represent a column vector with $n$ rows, we use $ \R^{n}$.
We represent the Euclidean norm using $\|\cdot\|$. For $a,b\in\R$ and $a< b$, we use $(a,b)$ to represent open interval in $\R$. For $a,b\in\N$ and $a\leq b$, we use $[a;b]$ to denote close interval in $\N$. To denote a vector $x \in \R^{n}$ with entries $x_1, \ldots, x_n$, we use $\col(x_1, \ldots, x_n)$, where $x_i \in \R, i \in [1;n]$ denotes $i$-th element of vector $x\in\R^n$. 
%We use $I_n$ and $0_{n\times m}$ to denote identity matrix in $\R^{n\times n}$ and zero matrix in $\R^{n\times m}$, respectively. 
A diagonal matrix in $\R^{n\times n}$ with diagonal entries $d_1,\ldots, d_n$ is denoted by $\textsf{diag}(d_1,\ldots, d_n)$.
% Given a matrix $M\in\R^{n\times m}$, $M^T$ represents transpose of matrix $M$. Given a matrix $P\in\R^{n\times n}$, $\Tr(P)$ represents trace of matrix $P$. 
% Given a set $A$, we use $|A|$ to represent the cardinality of the set $A$. 
Given $N \in \N$ sets $\X_i$, $i\in\left[1;N\right]$, the Cartesian product of the sets is given by $\X=\prod_{i\in\left[1;N\right]}\X_i:=\{(x_1,\ldots,x_N)|x_i\in \X_i,i\in\left[1;N\right]\}$.
Consider a set $\X_a\subset\R^n$, its projection on $i$th dimension, where $i\in[1;n]$, is given by an interval $[\underline \X_{ai},\overline \X_{ai}]\subset \R$, where $\underline \X_{ai}:=\min\{x_i\in\R\mid[x_1\ldots,x_n]\in \X_a\}$ and $\overline \X_{ai}:=\max\{x_i\in\R\mid[x_1,\ldots,x_n]\in \X_a\}$. We further define the hyper-rectangle $\llbracket \X_a \rrbracket = \prod_{i=[1;n]}{[\underline \X_{ai}, \overline \X_{ai}]}$.
We denote the empty set by $\emptyset$. The space of bounded continuous functions is denoted by $\Ce$.
Given a compact set $\X$, $int(\X)$ represents the interior of the set and 
$\partial \X = \X \setminus int(\X)$ represents the boundary of $\X$.
$\overline{\max}$ and $\overline{\min}$ are smooth approximations of the non-smooth $\max$ and $\min$ functions, defined as, $\overline{\max}(a,b) \approx \frac{1}{\nu}\ln(\e^{\nu a}+\e^{\nu b})$ and $\overline{\min}(a,b) \approx -\frac{1}{\nu}\ln(\e^{-\nu a}+\e^{-\nu b})$, respectively. The sign function is defined as
$
\sign(x) :=
\begin{cases}
  -1 & \text{if } x < 0 \\
  1  & \text{if } x \geq 0
\end{cases}
$.

\subsection{System Definition}
Consider the following control-affine nonlinear system:
\begin{align}
    \mathcal{S}: \dot{x} = f(x) + g(x)u, \label{eqn:sysdyn}
\end{align}
where $x(t) = \col(x_1(t), \ldots, x_n(t)) \in \X \subset \mathbb{R}^n$ and $u(t) \in \mathbb{R}^m$ are the state and control input vectors, respectively. The state space of the system is defined by the closed and connected set $\X$. 
%$w(t) \in \mathbb{W} \subset \mathbb{R}^n$ is the additive noise of a nonlinear system, where $\mathbb{W}$ is a compact set. 
The functions $f: \X \rightarrow \mathbb{R}^n$ and $g: \X \rightarrow \mathbb{R}^{n \times m}$ satisfy Assumption 1.
\begin{assumption} \label{assum:lip}
$f$ and $g$ are locally Lipschitz, and $g(x)g^T(x)$ is positive definite for all $x \in \mathbb{R}^n$.
\end{assumption}

\subsection{Problem Formulation}
The paper considers the desired behavior of the system $\mathcal{S}$, in (\ref{eqn:sysdyn}), defined in the form of reach-avoid-stay specifications. 

Let the compact and connected set $\T \subset \X$ be the target set, the set $\U \subset \X$ be an unsafe region containing $n_u \in \N$ unsafe sets
%, satisfying Assumption \ref{assum:obs}, 
defined as, $\U = \bigcup_{j \in [1;n_u]} \Obs^j$, where $\Obs^j \subset \X$ is a convex, compact, and connected set, representing, the $j$th unsafe set. 
Thus, in general, the unsafe region $\U$, although necessarily compact, can be disconnected and nonconvex.
%The reach-avoid-stay specification is defined as: the trajectories of the system starting in $\X \setminus \U$ should reach target $\T$ while avoiding unsafe region $\U$ and remaining in $\X$. 

Now, we will formally define the main controller synthesis problem considered in this work.

\begin{problem}\label{prob1}
    Given a control-affine system $\mathcal{S}$ in (\ref{eqn:sysdyn}) with Assumption \ref{assum:lip}, target set $\T \subset \X$, and unsafe region $\U$, as defined above, design a closed-form controller 
    %$u:\X \rightarrow \R^m$ 
    to ensure the satisfaction of the reach-avoid-stay specification, i.e.,
    for a given initial position $x(0) \in \X \setminus \U$, there exists $t \in \R_0^+$, such that, $x(t) \in \T$ and for all $t \in \R_0^+ : x(t) \in \X \setminus \U$.
\end{problem}
%Thus, the specification ensures that the system will eventually reach the target set $\T$, while staying in $\X$ and never entering the unsafe region $\U$.

We approach this problem using a funnel-based control strategy to enforce reachability specification (Section \ref{sec:reach}) and then dynamically modifying the funnel around the unsafe region to ensure that the system trajectory avoids the unsafe region while respecting the state constraints (Sections \ref{sec:avoid}-\ref{sec:multiobs}).

\begin{remark}
    If $\X$ is of any arbitrary shape, we redefine the state space as the hyper-rectangle $\hat{\X} := \llbracket \X \rrbracket = \prod_{i \in [1;n]} [\underline{\X}_i, \overline{\X}_i]$ and expand the unsafe region $\hat{\U} = \U \cup \left(\llbracket \X \rrbracket \setminus \X\right)$.
    Here, $[\underline{\X}_i, \overline{\X}_i]$ represent the projection of set $\X$ on the $i$th dimension.
    Note that, adding $\llbracket \X \rrbracket \setminus \X$ to the unsafe set $\U$ and following Algorithm 1 in Section \ref{sec:multiobs}, enforces stay specifications for an arbitrary state-space $\X$.
    %Hence, without loss of generality, we will assume $\X = \bigcup_{i \in [1;n]} [\underline{\X}_i, \overline{\X}_i]$ in rest of the paper.
\end{remark}
%%%%%%%%%%%%%%%%%%%%%%%%%%%%%%%%%%%%%%%%%%%%%%%%%%%%%%%%%%%%%%%%%%%%%%%%%%%%%%%
%%%%%%%%%%%%%%%%%%%%%%%%%%%%%%%%%%%%%%%%%%%%%%%%%%%%%%%%%%%%%%%%%%%%%%%%%%%%%%%

\section{Controller for Reachability Specification}\label{sec:reach}
In this section, we formulate a funnel-based control strategy aimed at guaranteeing that the system's trajectory adheres to the reachability specifications, i.e., given a target set $\T \subset \X$ and a given initial position $x(0) \in \X$, the controlled trajectory will eventually reach the target set in finite time. To solve the reachability problem, we leverage the funnel-based control approach \cite{PPC1}. We first define the funnel constraints over the trajectory as follows:
%\subsection{Funnel Control}
%Funnel-based control \cite{PPC1} is a powerful control technique that employs dynamically varying regions, referred to as funnels, to guide a system's trajectory. In accordance to our reachability problem, we define the funnel constraints over the trajectory as follows:
\begin{gather}
    \underbrace{-\underline{c}_i \rho_i(t) + \eta_i}_{\rho_{i,L}(t)} < x_i(t) < \underbrace{\overline{c}_i \rho_i(t) + \eta_i}_{\rho_{i,U}(t)}, \forall i \in [1;n], \label{eqn:ppc}
\end{gather}
where $\eta = \col(\eta_1, \ldots, \eta_n) \in int(\T)$, $\underline{c}_i = \eta_i - \underline{\X}_{i}$ and $\overline{c}_i = \overline{\X}_{i} - \eta_i$. 
%ensures that the funnel always guides the system trajectory within the state space limits: $[\underline{\X}_{i}, \overline{\X}_{i}], \forall i \in [1;n]$.
$\rho_i(t)$ is the continuously differentiable, positive and non-increasing funnel function defined as:
\begin{align}
    \rho_i(t) = (\rho_{i,0} - \rho_{i,\infty})e^{-l_i t} + \rho_{i,\infty} \label{eqn:perform}
\end{align}
with $\rho_{i,0}=1$, $\rho_{i,\infty} \in \left(0, \min \left( \rho_{i,0}, \frac{|\T_i-\eta_i|}{\max\{\underline{c}_i, \overline{c}_i\}}\right)\right)$ and $l_i \in \R_0^+$ governs the lower bound of convergence rate. 

The above choice of $\rho_{i,0}$, $\underline{c}_i$, and $\overline{c}_i$ ensures that the initial state of the system $x_i(0)$ is within $[\underline{\X}_{i}, \overline{\X}_{i}], \forall i \in [1;n]$ and by the aforementioned choice of $\rho_{i,\infty}$, as $t \rightarrow \infty, x(t) \in \prod_{i \in [1;n]} \left( \eta_i + [-\underline{c}_i\rho_{i,\infty}, \overline{c}_i\rho_{i,\infty}] \right) \subset \T$. Thus, enforcing system state inside funnel constraints \eqref{eqn:ppc} ensures reachability.
% prescribes a constrain on the system trajectory, leading to a neighborhood around the point $\eta$, $nbd(\eta) = \{x \in \R^n : -\underline{c}_i\rho_{i,\infty} < x_i-\eta_i < \overline{c}_i\rho_{i,\infty}\, \forall i \in [1;n]\}$, such that $nbd(\eta) \in \T$.
%Thus, we ensure that the initial state of the system $x_i(0)$ is within the funnel, i.e., $-\underline{c}_i\rho_{i,0} < x_i(0)-\eta_i < \overline{c}_i\rho_{i,0}$ and the system trajectory eventually reaches the target $\T$.
An example of a funnel designed for enforcing reachability specification is shown in Figure \ref{fig:Funnel} (a).

%%%%%%%%%%%%%%%%%%%%%%%%%%%%%%%%%%%%%%%%%%%%%%%%%%%%%%%%%%%%%%%%%%%%%%%%%%%%%%%
%%%%%%%%%%%%%%%%%%%%%%%%%%%%%%%%%%%%%%%%%%%%%%%%%%%%%%%%%%%%%%%%%%%%%%%%%%%%%%%

%\subsection{Controller Design and Stability Analysis}
% The controller design for a nonlinear control-affine system to fulfill reach-stay tasks is done in two stages.

% In stage I, we design the funnel. Given an initial state $x(0)$ and target state $\T$, we choose initial funnel parameters $\rho_{i,0}, \rho_{i,\infty} \text{ and } l$ and construct $\rho_i(t)$ as given in Equation \ref{eqn:perform}.

% Next, we proceed with establishing the control law. First, we normalize the difference system state $x(t)$ and target state $\eta$ with respect to the funnel to obtain the normalized error $\hat{e}(x,t) = \col(e_i(x,t))$, where
To design a controller enforcing condition \eqref{eqn:ppc}, we first define the normalized error ${e}(x,t) = \col(e_1(x_1,t),\ldots e_n(x_n,t))$, as
% \begin{align}
%     e_i(x,t) = \frac{x_i(t)-\eta_i}{\rho_{i}(t)}, \forall i \in [1;n]. \label{eqn:e_rs}
% \end{align}
\begin{align}
    e_i(x_i,t) = \frac{x_i(t)-\frac{1}{2}(\rho_{i,U}(t) + \rho_{i,L}(t))}{\frac{1}{2}(\rho_{i,U}(t) - \rho_{i,L}(t))}, \forall i \in [1;n]. \label{eqn:e_rs}
\end{align}
Now the corresponding constrained region $\mathbb{D}$ can be represented by $\mathbb{D} := \{{e}(x,t) : e_i(x_i,t) \in (-1, 1), \forall i \in [1;n]\}$. 
Next the normalized error is transformed through a  smooth and strictly
increasing transformation function $y: \mathbb{D} \rightarrow \mathbb{R}^n$ with $y(0) = 0$. The transformed error is then defined as $\varepsilon = \col(\varepsilon_1, \ldots, \varepsilon_n)$, where
\begin{align}
    \varepsilon_i(x,t) = y(e_i(x,t)) = \ln\left(\frac{1+e_i(x,t)}{1-e_i(x,t)}\right), \forall i \in [1;n]. \label{eqn:eps_rs}
\end{align}
By this definition, if the transformed error $\varepsilon(x,t)$ is bounded, then the normalized error ${e}(x,t)$ is confined within the constrained region $\mathbb{D}$ and the state $x(t)$ adheres to (\ref{eqn:ppc}).
We also define $\xi(x,t) = \textsf{diag}(\xi_1(x,t), \ldots, \xi_n(x,t))$ with 
\begin{align}
    %\xi_i(x,t) = \frac{2}{\rho_{i}(t)(1-e_i(x,t)^2)}, \forall i \in [1;n]. 
    \xi_i(x,t) = \frac{4}{\rho_{i,d}(t)(1-e_i(x,t)^2)}, \forall i \in [1;n]
    \label{eqn:xi_rs}
\end{align}
where $\rho_{i,d} = \rho_{i,U} - \rho_{i,L}$. 

Now, in Theorem \ref{thm:control}, we propose a control strategy $u(x,t)$ such that the state trajectory is constrained within the funnel.

\begin{theorem} \label{thm:control}
    Consider the control-affine system $\mathcal{S}$ given in (\ref{eqn:sysdyn}) with Assumptions \ref{assum:lip}. 
    Given a target set $\T$, the funnel constraints $\rho_{i,U}(t)$ and $\rho_{i,L}(t)$ (\ref{eqn:ppc}), the control strategy
    \begin{multline} \label{eqn:Control_rs}
        u(x,t) = -g(x)^T(g(x) g(x)^T)^{-1} \\
        \left(k\xi(x,t) \varepsilon(x,t) - \frac{1}{2}\dot{\rho}_d(t) {e}(x,t) \right)
    \end{multline}
    will drive the state trajectory $x(t)$, to the target set $\T$ in finite time, i.e., $\exists t \in \R_0^+ : x(t) \in \T$.
    Here, $k$ is any positive constant,  $\rho_d := \textsf{diag}(\rho_{1,d},\ldots,\rho_{n,d})$, with $\rho_{i,d} = \rho_{i,U} - \rho_{i,L}$, ${e}(x,t)$, $\varepsilon(x,t)$, and $\xi(x,t)$ are defined in \eqref{eqn:e_rs}, \eqref{eqn:eps_rs}, and \eqref{eqn:xi_rs}, respectively.
\end{theorem}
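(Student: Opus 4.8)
The plan is to follow the standard funnel / prescribed-performance argument in three stages: first reduce the closed loop to an error system and establish a unique maximal solution; second, prove by a Lyapunov argument that the transformed error $\varepsilon$ stays bounded on the maximal interval; and third, use this to show the maximal interval is all of $\R_0^+$ and that the shrinking funnel forces $x$ into $\T$ in finite time.

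First I would substitute the controller \eqref{eqn:Control_rs} into the dynamics \eqref{eqn:sysdyn}. Since Assumption \ref{assum:lip} makes $g(x)g^T(x)$ positive definite, hence invertible, the right pseudo-inverse satisfies $g(x)g^T(x)(g(x)g^T(x))^{-1}=I$, so the closed loop collapses to $\dot x = f(x) - k\xi(x,t)\varepsilon(x,t) + \tfrac12\dot\rho_d(t)e(x,t)$. On the open set where $e_i\in(-1,1)$ for all $i$ the maps $e,\varepsilon,\xi$ are smooth, and with $f,g$ locally Lipschitz the right-hand side is locally Lipschitz in $x$ and continuous in $t$; the choices $\rho_{i,0}=1$, $\underline c_i=\eta_i-\underline\X_i$, $\overline c_i=\overline\X_i-\eta_i$ place $x(0)$ strictly inside the initial funnel, so $e(x(0),0)\in(-1,1)^n$. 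Standard ODE theory then yields a unique maximal solution on an interval $[0,\tau_{\max})$ that remains in the open set, with the property that if $\tau_{\max}<\infty$ the trajectory must leave every compact subset of that set, i.e. some $e_i\to\pm1$.

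Second I would take $V=\tfrac12\varepsilon^T\varepsilon$ and differentiate. Writing $m_i=\tfrac12(\rho_{i,U}+\rho_{i,L})$, a direct calculation gives $\dot e_i=\tfrac{2}{\rho_{i,d}}(\dot x_i-\dot m_i)-e_i\tfrac{\dot\rho_{i,d}}{\rho_{i,d}}$, and using $\tfrac{d}{de_i}y=\tfrac{2}{1-e_i^2}$ with the definition \eqref{eqn:xi_rs} of $\xi_i$ yields $\dot\varepsilon_i=\xi_i(\dot x_i-\dot m_i)-\tfrac12\xi_i\dot\rho_{i,d}e_i$. Substituting the closed loop, the contribution $\tfrac12\xi\dot\rho_d e$ coming from the $\tfrac12\dot\rho_d e$ correction term of \eqref{eqn:Control_rs} cancels exactly the $-\tfrac12\xi\dot\rho_d e$ in $\dot\varepsilon$, leaving the clean form $\dot\varepsilon=\xi(f-\dot m)-k\xi^2\varepsilon$ and hence $\dot V=(\xi\varepsilon)^T(f-\dot m)-k\norm{\xi\varepsilon}^2$. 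I expect this cancellation and the bookkeeping of the diagonal factors to be the main technical obstacle, since removing the $\dot\rho_d$ coupling is precisely the role the correction term is designed to play.

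Third I would bound the residual. On $[0,\tau_{\max})$ the funnel \eqref{eqn:ppc} confines $x(t)$ to $\prod_i(\underline\X_i,\overline\X_i)$, whose closure is compact, so $f$ is bounded there, while $\dot m_i=\tfrac12(\overline c_i-\underline c_i)\dot\rho_i$ is bounded by the exponential decay of $\rho_i$; thus $\norm{f-\dot m}\le\bar F$ for a constant $\bar F$, giving $\dot V\le\norm{\xi\varepsilon}(\bar F-k\norm{\xi\varepsilon})$. Since $1-e_i^2\le1$ and $\rho_{i,d}(t)\le\rho_{i,d}(0)$, each $\xi_i$ is bounded below by a positive constant, so $\norm{\xi\varepsilon}\ge\xi_{\min}\norm{\varepsilon}$ and $\dot V\le0$ whenever $\norm{\varepsilon}\ge\bar F/(k\xi_{\min})$; consequently $\norm{\varepsilon}\le\bar\varepsilon:=\max\{\norm{\varepsilon(0)},\bar F/(k\xi_{\min})\}$ on $[0,\tau_{\max})$. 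Boundedness of $\varepsilon$ forces each $e_i$ into the compact set $[-\bar e,\bar e]\subset(-1,1)$ with $\bar e=(\e^{\bar\varepsilon}-1)/(\e^{\bar\varepsilon}+1)$, so the trajectory cannot approach the boundary of the open set, contradicting the escape property unless $\tau_{\max}=\infty$; hence the funnel holds for all $t\in\R_0^+$. Finally, since $\rho_i$ is non-increasing, $\rho_{i,U}(t)\downarrow\eta_i+\overline c_i\rho_{i,\infty}$ and $\rho_{i,L}(t)\uparrow\eta_i-\underline c_i\rho_{i,\infty}$, so the funnel cross-section $B(t)=\prod_i[\rho_{i,L}(t),\rho_{i,U}(t)]$ is a nested family shrinking to $B_\infty=\prod_i[\eta_i-\underline c_i\rho_{i,\infty},\eta_i+\overline c_i\rho_{i,\infty}]$, which the choice of $\rho_{i,\infty}$ places strictly inside $\T$; by continuity and compactness there is a finite $T^\ast$ with $B(T^\ast)\subset\T$, and since $x(T^\ast)\in B(T^\ast)$ we conclude $x(T^\ast)\in\T$, establishing $\exists\,t\in\R_0^+:x(t)\in\T$.
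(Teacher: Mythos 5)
Your proposal is correct and follows essentially the same route as the paper's argument (the paper defers the proof of this theorem to that of Theorem \ref{theorem_ras}, which uses the identical three-step structure: maximal solution of the error IVP, a Lyapunov bound on $V=\tfrac12\varepsilon^T\varepsilon$ exploiting the cancellation produced by the $\tfrac12\dot\rho_d e$ term, and a contradiction argument extending $\tau_{\max}$ to $\infty$). Your closing step, which explicitly converts the shrinking funnel cross-section into a finite time $T^\ast$ with $x(T^\ast)\in\T$, is a welcome addition that the paper only asserts implicitly through its choice of $\rho_{i,\infty}$.
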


\begin{proof}
    The proof follows on similar grounds as that of Theorem \ref{theorem_ras} and is omitted here due to space constraints.
\end{proof}

Thus, given a system $\mathcal{S}$ in (\ref{eqn:sysdyn}), a target set $\T$ in the state space $\X$, we can define a funnel and the closed-form well-defined control law (\ref{eqn:Control_rs}) that will guide the system trajectory to the target, enforcing reachability specifications.

%%%%%%%%%%%%%%%%%%%%%%%%%%%%%%%%%%%%%%%%%%%%%%%%%%%%%%%%%%%%%%%%%%%%%%%%%%%%%%%%
%%--------------------------------NEW SECTION---------------------------------%%
%%%%%%%%%%%%%%%%%%%%%%%%%%%%%%%%%%%%%%%%%%%%%%%%%%%%%%%%%%%%%%%%%%%%%%%%%%%%%%%%
\section{Extension to Reach-Avoid-Stay Specification}\label{sec:avoid}
In this section, we begin by exploring the integration of avoidance of unsafe regions within the funnel-based control framework. Subsequently, we present an adaptive funnel design strategy that enables the successful accomplishment of reach-avoid-stay tasks.

\begin{figure*}[h]
    \centering
    \includegraphics[width=\textwidth]{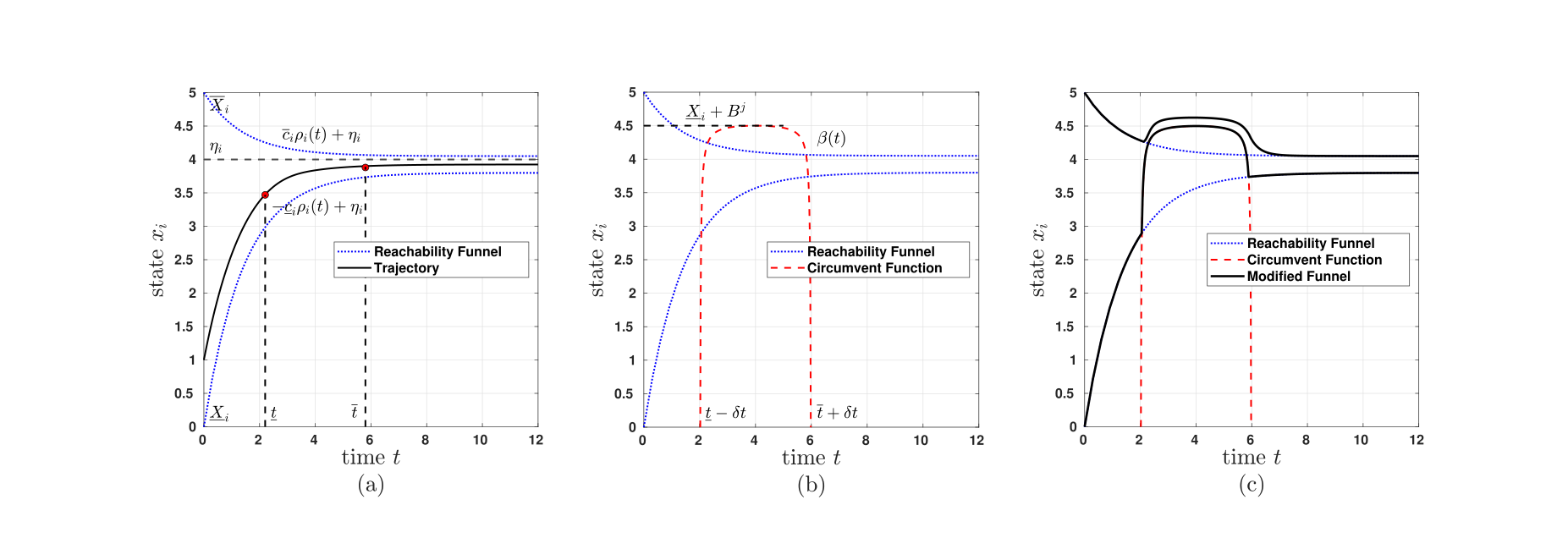}
    \vspace{-0.8cm}
    \caption{Funnel Design. (a) Reachability funnel to obtain $\underline{t}$ and $\overline{t}$. (b) Introduction of circumvent function. (c) Funnel adapted around circumvent function.}
    \label{fig:Funnel}
\end{figure*}

\subsection{Design of Circumvent Function}\label{sec:bump}
Consider an unsafe region $\U$ with $n_u$ compact, connected and convex sets $\Obs^j$, for $j \in [1;n_u]$. We propose to introduce the avoid specifications through a circumvent function $\beta^j(t)$, $j \in [1;n_u]$. 
\begin{remark}
    Note that although we are putting an assumption on $\Obs^j$ to be convex and connected, the general unsafe zone $\U$ can be concave and disconnected. This will further be elaborated upon in Section \ref{sec:multiobs}.
\end{remark}
%$\beta^j_i$ corresponds to $i$th dimension and $j$th obstacle, with $i \in [1;n]$ and $j \in [1;n_u]$.

First, given an initial state $x(0) \in \X \setminus \U$, we obtain the time range $[\underline{t}^j, \overline{t}^j]$ over which the system trajectory $x(t)$, on application of the control law $u(x,t)$ (\ref{eqn:Control_rs}) to satisfy the reachability specification, intersects with the $j$th unsafe set $\Obs^j$, and is given by,
$
    \underline{t}^j = \inf \{ t \in \R^+ : x(t) \cap \Obs^j \neq \emptyset \}$ and $\overline{t}^j = \sup \{ t \in \R^+ : x(t) \cap \Obs^j \neq \emptyset \}$.

Consider the first unsafe set that the system trajectory intersects be $\Obs^{\hat{j}}$, where
\begin{align} \label{eqn:firstobs}
    \hat{j} = \arg \min_{j \in [1;n_u]} \underline{t}^j_i.
\end{align}
Following this, we will discuss the introduction of the circumvent function and adaptive funnel design to steer clear of $\Obs^{\hat{j}}$. The subsequent extension to deal with the entire unsafe region $\U$ with multiple disconnected concave unsafe sets is presented in Section \ref{sec:multiobs}.

Further, note that the system's trajectory enters the unsafe zone $\Obs^{\hat{j}}$, if and only if $\exists t \in \R^+$, such that $x_i(t) \cap [\underline{\Obs}^{\hat{j}}_i, \overline{\Obs}^{\hat{j}}_i] \neq \emptyset, \forall i \in [1;n]$.
Hence, to satisfy the avoid specification, it is sufficient to introduce the circumvent function only in one dimension $i^{\hat{j}}$, given by
\begin{gather}\label{eqn:firstdim}
    i^{\hat{j}} = \arg \min_{i \in [1;n]}{t_i^{\hat{j}}}, 
\end{gather}
% \begin{gather}
%     i^{\hat{j}} = \arg \min_{i} \left( \inf \{ t \in \R^+ : x(t) \cap [\underline{\Obs}^{\hat{j}}_i, \overline{\Obs}^{\hat{j}}_i] \neq \emptyset \} \right), \forall i = [1;n]
% \end{gather}
where $t_i^{\hat{j}} = \inf \{ t \in \R^+ : x(t) \cap [\underline{\Obs}^{\hat{j}}_i, \overline{\Obs}^{\hat{j}}_i] \neq \emptyset \}$ and $[\underline{\Obs}^{\hat{j}}_i, \overline{\Obs}^{\hat{j}}_i]$ is the projection of $\Obs^{\hat{j}}$ in the $i$th dimension. Note that, $i^{\hat{j}}$ may not be unique and in the case, the trajectory enters the projections of $\Obs^{\hat{j}}$ in multiple dimensions at the same time, the $\arg \min$ function returns $i^{\hat{j}}$ randomly from those multiple alternatives. The advantage of this random selection will be discussed in Section \ref{sec:multiobs}.

We also have the liberty to choose between modifying either the upper or the lower constraint boundary of the funnel. Unless the scenario where $\underline{\Obs}^{\hat{j}}_i = \underline{\X}_{i}$ or $\overline{\Obs}^{\hat{j}}_i = \overline{\X}_{i}$, where the circumvent function should necessarily be introduced in the lower and upper constraint boundary, respectively (it can be visualized as the scenario of a wall-shaped obstacle, where there is no space between the state space boundary and the obstacle at one end), we randomly choose between the two options. Although an optimal alternative can be easily chosen, the advantage of random picking is discussed in Section \ref{sec:multiobs}.

%We, finally, decide whether to introduce the circumvent function on the lower constraint boundary or the upper constraint boundary of the funnel. For $i = i^{\hat{j}}$, the circumvent is necessarily added to the lower constraint if, $\underline{\Obs}^{\hat{j}}_i = \underline{\X}_{i}$, and to the upper constraint if, $\overline{\Obs}^{\hat{j}}_i = \overline{\X}_{i}$. This can be visualised as the scenario of a wall-shaped obstacle, where there is no space between the state space boundary and the obstacle at one end.
%Otherwise, randomly choose between the two options. Although an optimal alternative can be easily chosen, the advantage of random picking is discussed in Section \ref{sec:multiobs}.

% For $i = i^{\hat{j}}$, the circumvent is added to the lower constraint if, (1) $\frac{\overline{\Obs}^{\hat{j}}_i + \underline{\Obs}^{\hat{j}}_i}{2} < \eta_i$ or (2) $\underline{\Obs}^{\hat{j}}_i = \underline{\X}_{ai}$. Similarly, for $i = i^{\hat{j}}$, the circumvent is added to the upper constraint if, (1) $\frac{\overline{\Obs}^{\hat{j}}_i + \underline{\Obs}^{\hat{j}}_i}{2} > \eta_i$ or (2) $\overline{\Obs}^{\hat{j}}_i = \overline{\X}_{ai}$. The second cases consider the scenario of a wall, when there is no space between the state space boundary and the obstacle at one end.

We define the circumvent function $\beta$ on lower constraint boundary for $i = i^{\hat{j}}$ as: 
\begin{equation} \label{eqn:bump1}
\beta^{\hat{j}}_i(t) =
    \begin{cases}
        B^{\hat{j}} \e^{\frac{-k^{\hat{j}} \left( t- m^{\hat{j}} \right)^2}{(r^{\hat{j}})^2 - \left(t-m^{\hat{j}}  \right)^2}} + \underline{\X}_{i}, & \forall t \in T_{act}\\
        \underline{\X}_{i}, & \forall t \in \R^+ \setminus T_{act}
    \end{cases}
\end{equation}
where, $B^{\hat{j}} = \overline{\Obs}^{\hat{j}}_i - \underline{\X}_{ai} + \delta B$, $m^{\hat{j}} := \frac{\underline{t}^{\hat{j}} + \overline{t}^{\hat{j}}}{2}$, $r^{\hat{j}} := \frac{\underline{t}^{\hat{j}} - \overline{t}^{\hat{j}}}{2} + \delta t$ and $\delta t \in \R^+$ is a tolerance factor. The function is active in the time range $T_{act} = [\underline{t}^{\hat{j}}-\delta t, \overline{t}^{\hat{j}}+\delta t]$ when the system trajectory avoids $\Obs^{\hat{j}}$. The 
$\delta B$ governs how far from $\Obs^{\hat{j}}$ should the trajectory stay clear.
$k^{\hat{j}} \in \R^+$ is a small positive constant and determines the smoothness of the circumvent function. 

Similarly, we define a circumvent on the upper constraint boundary as
\begin{equation} \label{eqn:bump2}
\beta^{\hat{j}}_i(t) =
    \begin{cases}
        -B^{\hat{j}} \e^{\frac{-k^{\hat{j}} \left( t- m^{\hat{j}} \right)^2}{(r^{\hat{j}})^2 - \left(t-m^{\hat{j}}  \right)^2}} + \overline{\X}_{ai}, & \forall t \in T_{act}\\
        \overline{\X}_{ai}, & \forall t \in \R^+ \setminus T_{act}
    \end{cases}
\end{equation}
with $B^{\hat{j}} = \overline{\X}_{ai} - \underline{\Obs}^{\hat{j}}_i + \delta B$
and the rest of the parameters are the same as above.

An example of the introduction of the circumvent function on the lower constraint of a funnel is shown in Figure \ref{fig:Funnel} (b).

\subsection{Adaptive Funnel Design} \label{Funnel}
Given a target set $\T$ and obstacle $\Obs^{\hat{j}}$, choose a point $\eta \in int(\T\setminus U)$. Now, according to \eqref{eqn:ppc}, construct the funnel $\rho_L$ and $\rho_U$ to satisfy the reachability specification. As defined in the previous subsection, we characterize the obstacle using the circumvent function $\beta^{\hat{j}} (t)$, and now, we incorporate it into the funnel design. To solve Problem \ref{prob1}, we propose the following adaptive funnel constraints.
%Consider a reach and avoid task where the reach specifications and obstacle avoidance are captured by $\rho(t)$ (3) and $\beta(t)$ (4) respectively. After incorporating $\beta(t)$ into $\rho(t)$, the revamped funnel is defined as:
\begin{align}
   \hspace{-0.5em} \begin{matrix}
\text{If } \beta^{\hat{j}}_i\text{ introduced on}\\ \text{lower constraint } \rho_{i,L}
\end{matrix}\hspace{-0.3em}:\hspace{-0.3em}
\left\{\begin{matrix}
\gamma_{i,L} (t) := \overline{\max}(\rho_{i,L}(t), \beta^{\hat{j}}_i(t)),\\
    \gamma_{i,U} (t) := \rho_{i,U}(t) + \alpha_i(t),\hfill
\end{matrix}\right.\\
%\end{align}
%\begin{align}
    \hspace{-0.55em}\begin{matrix}
\text{If } \beta^{\hat{j}}_i\text{ introduced on}\\ \text{upper constraint } \rho_{i,U}\\ 
\end{matrix}\hspace{-0.3em}:\hspace{-0.3em}
\left\{\begin{matrix}
\gamma_{i,L} (t) := \rho_{i,L}(t) - \alpha_i(t),\hfill\\
    \gamma_{i,U} (t) := \overline{\min}(\rho_{i,U}(t), \beta^{\hat{j}}_i(t)),
\end{matrix}\right.
\end{align}
% \begin{subequations}
% \begin{align}
%     \gamma_{i,L} (t) &:= \overline{\max}(\rho_{i,L}(t), \beta^{\hat{j}}_i(t)),\\
%     \gamma_{i,U} (t) &:= \rho_{i,U}(t) + \alpha_i(t),
% \end{align}
% \end{subequations}
% if the circumvent is introduced on the lower constraint boundary. Or
% \begin{subequations}
% \begin{align}
%     \gamma_{i,L} (t) &:= \rho_{i,L}(t) - \alpha_i(t),\\
%     \gamma_{i,U} (t) &:= \overline{\min}(\rho_{i,U}(t), \beta^{\hat{j}}_i(t)),
% \end{align}
% \end{subequations}
% if the circumvent is introduced on the upper constraint boundary.
The modifications in the constraints of the funnel are captured by a continuously differentiable update function, $\alpha(t) = \col(\alpha_1(t), \ldots, \alpha_n(t))$. The adaptive law governing the dynamics of the update function is defined as:
\begin{align}\label{eqn:adap}
    \dot{\alpha}_i(t) = \frac{\theta_i(t)}{\psi_i(t) + \alpha_i(t)} - \kappa\alpha_i(t), \ \alpha_i(0) = 0,
\end{align}
where $\psi_i(t) = \rho_{i,U}(t)-\beta^{\hat{j}}_i(t)-\mu$ if  $\beta^{\hat{j}}_i$ introduced on lower constraint $\rho_{i,L}$ and $\psi_i(t) = \beta^{\hat{j}}_i(t)-\rho_{i,L}(t)-\mu$ if  $\beta^{\hat{j}}_i$ introduced on upper constraint $\rho_{i,U}$, with $\mu \in \R^+$ as a tolerance factor. 
% \begin{align}
%     \psi_i(t):=\left\{\begin{matrix}
%  \rho_{i,U}(t)-\beta^{\hat{j}}_i(t)-\mu,& \text{If } \beta^{\hat{j}}_i\text{ introduced on lower constraint } \rho_{i,L} \\ 
% \beta^{\hat{j}}_i(t)-\rho_{i,L}(t)-\mu, & \text{If } \beta^{\hat{j}}_i\text{ introduced on lower constraint } \rho_{i,L}
% \end{matrix}\right.
% \end{align}
% if the circumvent is introduced on the lower constraint boundary, or
% \begin{align}
%     \psi_i(t) = \beta^{\hat{j}}_i(t)-\rho_{i,L}(t)-\mu \nonumber
% \end{align}
% if the circumvent is introduced on the upper constraint boundary, 
$\theta_i(t)$ acts as a trigger, activating the first part of the update function only when reach-avoid specifications are conflicting with a tolerance of $\mu$ and is given by:
\begin{align}
    \theta_i(t) = \theta_o(1-\sign(\psi_i(t))), \nonumber
\end{align}
where $\theta_o \in \R^+$ controls the deviation of the funnel around the circumvent function.

Further, the non-smooth $\sign$ function is approximated by the smooth function $\tanh$. When the conflict is resolved, $\theta_i(t)$ becomes 0 and the second part decays $\alpha_i(t)$ exponentially back to zero with a rate of decay governed by constant $\kappa$. An example of how the circumvent function modifies the funnel is shown in Figure \ref{fig:Funnel} (c).

% \begin{figure*}[h]
%     \centering
%     \includegraphics[scale=0.6]{Plots/Fun.eps}
%     \caption{Revamped Funnel}
%     \label{fig:Modif}
% \end{figure*}

Let us now define $\gamma_L = \col (\gamma_{1,L}, \ldots, \gamma_{n,L})$, $\gamma_U = \col (\gamma_{1,U}, \ldots, \gamma_{n,U})$, $\gamma_d = \textsf{diag} (\gamma_{1,U}-\gamma_{1,L}, \ldots, \gamma_{n,U}-\gamma_{n,L})$, and $\gamma_s = \col (\gamma_{1,U}+\gamma_{1,L}, \ldots, \gamma_{n,U}+\gamma_{n,L})$.
% $\gamma_{s} = \col (\gamma_{1,s}, \ldots, \gamma_{n,s})$, where, $\gamma_{i,s} = \gamma_{1,U}+\gamma_{1,L}$.
% $\gamma_d = \textsf{diag}(\gamma_{1,d}, \ldots, \gamma_{n,d})$, where, $\gamma_{i,d}=\gamma_{i,U}-\gamma_{i,L}, \forall i \in [1;n]$.

\begin{lemma} \label{lem:funnel}
     %As derived in Section \ref{Funnel} and according to the definitions of $\gamma_s(t)$ and $\gamma_d(t)$, we infer that 
     $\gamma_s(t), \dot{\gamma}_s(t), \gamma_d(t), \dot{\gamma}_d(t) \in \Ce$. 
\end{lemma}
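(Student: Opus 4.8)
The plan is to exploit the fact that $\gamma_s$ and $\gamma_d$, together with their time-derivatives, are assembled entirely from four elementary ingredients --- the performance functions $\rho_{i,L}, \rho_{i,U}$, the circumvent function $\beta^{\hat{j}}_i$, the update $\alpha_i$, and the smooth selectors $\overline{\max}, \overline{\min}$ --- through sums, differences and compositions. Since $\Ce$ is closed under these operations, and since $\overline{\max}$ and $\overline{\min}$ are globally smooth with uniformly bounded gradient (they are of log-sum-exp type), it suffices to show that each of the three ingredients and its first derivative is bounded and continuous; the claim for $\gamma_s, \dot{\gamma}_s, \gamma_d, \dot{\gamma}_d$ then follows component-wise by the chain rule.

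First I would dispatch the two easy ingredients. The function $\rho_i$ in \eqref{eqn:perform} is a decaying exponential, hence smooth, takes values in $[\rho_{i,\infty}, \rho_{i,0}]$, and has a bounded derivative; consequently $\rho_{i,L}, \rho_{i,U}$ and $\dot{\rho}_{i,L}, \dot{\rho}_{i,U}$ all lie in $\Ce$. For $\beta^{\hat{j}}_i$ in \eqref{eqn:bump1}--\eqref{eqn:bump2}, the only delicate point is continuous differentiability at the two junctions $t = \underline{t}^{\hat{j}} - \delta t$ and $t = \overline{t}^{\hat{j}} + \delta t$, where $(t - m^{\hat{j}})^2 \to (r^{\hat{j}})^2$. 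After the substitution $s = (t - m^{\hat{j}})/r^{\hat{j}}$ the active branch is the standard bump $\e^{-k^{\hat{j}} s^2/(1 - s^2)}$, whose value and every derivative vanish as $s \to \pm 1$; I would use this to show that $\beta^{\hat{j}}_i$ and $\dot{\beta}^{\hat{j}}_i$ match the constant branch at the junctions, giving continuity, while boundedness is immediate because the bump is bounded by $B^{\hat{j}}$ on a compact support.

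The hard part will be the adaptive update $\alpha_i$ defined by the scalar ODE \eqref{eqn:adap}, whose right-hand side contains the potentially singular term $\theta_i/(\psi_i + \alpha_i)$. The plan here is a forward-invariance (barrier) argument for the quantity $z_i := \psi_i + \alpha_i$. I would first note that $\psi_i$ is continuous and bounded (being built from the already-handled $\rho$ and $\beta$), and that $\theta_i \in [0, 2\theta_o]$ is continuous once $\sign$ is replaced by its $\tanh$ approximation. When there is no conflict ($\psi_i > 0$) the trigger gives $\theta_i \approx 0$, so $\alpha_i$ merely decays toward $0$ under $-\kappa \alpha_i$ and $z_i > 0$ holds trivially; when a conflict occurs ($\psi_i < 0$) and $z_i \to 0^+$, the barrier term $\theta_i/z_i \to +\infty$ forces $\dot{\alpha}_i$, and hence $\dot{z}_i$, to be large and positive, repelling $z_i$ from $0$. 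Making this precise --- extracting a uniform lower bound $z_i \ge \underline{z} > 0$ from the interplay of the barrier term with $\dot{\psi}_i$ --- is the crux of the lemma and the step I expect to cost the most effort.

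Once the lower bound $z_i \ge \underline{z}$ is in hand, the remainder would be routine: the barrier term is then bounded by $2\theta_o/\underline{z}$, so $\dot{\alpha}_i \le 2\theta_o/\underline{z} - \kappa \alpha_i$, and a standard comparison argument yields $0 \le \alpha_i \le 2\theta_o/(\kappa \underline{z})$, i.e. $\alpha_i \in \Ce$. Boundedness of $\alpha_i$ together with $z_i$ staying away from $0$ then makes the right-hand side of \eqref{eqn:adap} bounded and continuous, so $\dot{\alpha}_i \in \Ce$ as well. Collecting the three ingredients and invoking closure of $\Ce$ under the smooth operations used to build $\gamma_{i,L}, \gamma_{i,U}$ completes the proof that $\gamma_s, \dot{\gamma}_s, \gamma_d, \dot{\gamma}_d \in \Ce$.
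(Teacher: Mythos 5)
Your proposal is correct and uses the same overall decomposition as the paper: everything reduces to showing $\alpha_i,\dot{\alpha}_i\in\Ce$, since $\rho_i$ and $\beta^{\hat{j}}_i$ are manifestly bounded and continuously differentiable and $\overline{\max},\overline{\min}$ are smooth. Where you genuinely diverge is in how the singular term $\theta_i/(\psi_i+\alpha_i)$ is tamed. The paper argues twice by contradiction: it supposes $\psi_i+\alpha_i\to 0$, asserts that this forces $\dot{\psi}_i+\dot{\alpha}_i\in\Ce$ and hence $\dot{\alpha}_i\in\Ce$, and contrasts this with the blow-up of the barrier term in \eqref{eqn:adap}; it then separately supposes $\alpha_i\to\pm\infty$ and contradicts the sign of $-\kappa\alpha_i$. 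Your route is a forward-invariance argument for $z_i:=\psi_i+\alpha_i$ followed by a comparison bound on $\alpha_i$. This buys rigor: the paper's inference that convergence of $\psi_i+\alpha_i$ to zero implies boundedness of its derivative is not valid in general, whereas your barrier estimate does close, since near $z_i=0$ one has $\alpha_i=z_i-\psi_i$ automatically bounded, so $\dot{z}_i\geq 2\theta_o/z_i-\sup|\dot{\psi}_i|-\kappa\sup|\alpha_i|>0$ below an explicit level $\underline{z}>0$, and then $0\leq\alpha_i\leq 2\theta_o/(\kappa\underline{z})$ by comparison. You also verify the $C^1$ matching of the bump at the junction times, which the paper only asserts. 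The one step you flag but do not execute, extracting the uniform $\underline{z}$, goes through exactly as sketched; the only addition needed is the observation that $z_i(0)=\psi_i(0)>0$ (since $\alpha_i(0)=0$ and the circumvent function is inactive at $t=0$ for suitably small $\mu$), which is required to initialize the invariance argument.
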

% \begin{proof}
%     The detailed proof is available in *Archive Paper*.
% \end{proof}
\begin{proof}
    From definitions \eqref{eqn:perform},\eqref{eqn:bump1}, and \eqref{eqn:bump2}, one has  $\rho(t) \in \Ce$ and $\beta^{\hat{j}}(t) \in \Ce$. Thus, to show that $\gamma_s(t), \gamma_d(t), \dot{\gamma}_s(t)$ and $\dot{\gamma}_d(t) \in \Ce$, it is sufficient to show that $\alpha(t), \dot{\alpha}(t)$ $\in \Ce$.

    Since $\eta(t), \rho(t), \beta^{\hat{j}}(t) \in \Ce$ and $\mu > 0$ is a bounded tolerance, $\psi(t) = \col(\psi_1(t), \ldots, \psi_n(t))$ is also continuous and bounded. Further, $\dot{\psi}(t) = \col(\dot{\psi}_1(t), \ldots, \dot{\psi}_n(t))\in \Ce$. Hence, $\psi(t), \dot{\psi}(t) \in \Ce$.

    Now, depending on the sign of $\psi(t)$, consider the two cases and look at $\alpha(t)$ and $\dot{\alpha}(t)$ elementwise:
   
    \textbf{Case I.} [$\psi_i(t) \geq 0$] This implies that $\sign(\psi_i(t)) = 1 $ and $ \theta_i(t) = 0$. Thus, $\dot{\alpha}_i(t) = -\kappa \alpha_i(t)\in\Ce$ which implies $\alpha_i(t) = \alpha_i(0)e^{-\kappa t}\in \Ce$ 
    % \begin{align*} 
    %     & \dot{\alpha}_i(t) = -\kappa \alpha_i(t) \implies \dot{\alpha}_i(t) \in \Ce, \\
    %     & \alpha_i(t) = \alpha_i(0)e^{-\kappa t} \implies \alpha_i(t) \in \Ce.
    % \end{align*}
    for $i \in [1;n]$.
    
    \textbf{Case II.} [$\psi_i(t)<0$] This implies that $\sign(\psi_i(t)) = -1$ and $\theta_i(t) = 2\theta_o$. 
    \begin{align} \label{cb:alpha2}
        \dot{\alpha}_i(t) = \frac{2\theta_o}{\psi_i(t) + \alpha_i(t)} -\kappa \alpha_i(t).
    \end{align}

    We will prove the boundedness of $\dot{\alpha}_i(t)$ by contradiction.
    Let $\psi_i(t) + \alpha_i(t) \rightarrow 0$ (converges to zero). Then taking its derivative w.r.t time $t$, we can say $\dot{\psi}_i(t) + \dot{\alpha}_i(t) \in \Ce$. From the fact that $\dot{\psi}_i(t) \in \Ce$, we have  $\dot{\alpha}_i(t)\in\Ce$. 
    However, from (\ref{cb:alpha2}) we can observe that if $\psi_i(t) + \alpha_i(t) \rightarrow 0$ then $\dot{\alpha}_i(t) \rightarrow \infty$. This leads to a contradiction. Therefore, $\psi_i(t) + \alpha_i(t) \nrightarrow 0$ (does not converge to zero) and consequently, $\dot{\alpha}_i(t) \in \Ce$ for $i \in [1;n]$.
    
    To further prove the boundedness of $\alpha_i(t)$, we will again use contradiction.
    Let $\alpha_i(t) \rightarrow \infty$. Now, since $\psi_i(t)$ is bounded and $2\theta_o$ is a finite constant, $\frac{2\theta_o}{\psi_i(t)+\alpha_i(t)} \rightarrow 0 \implies \dot{\alpha}_i(t) = -\kappa \alpha_i(t) \rightarrow -\infty$. But $\alpha_i(t) \rightarrow \infty$ and $\dot{\alpha}_i(t) \rightarrow -\infty$ are contradictory. Hence, $
        \alpha_i(t) \nrightarrow \infty \text{ for } i \in [1;n].$
    
    Let $\alpha_i(t) \rightarrow -\infty$. Similarly, since $\psi_i(t)$ is bounded and $2\theta_o$ is a finite constant, $\frac{2\theta_o}{\psi_i(t)+\alpha_i(t)} \rightarrow 0 \implies \dot{\alpha}_i(t) = -\kappa \alpha_i(t) \rightarrow \infty$. But $\alpha_i(t) \rightarrow -\infty$ and $\dot{\alpha}_i(t) \rightarrow \infty$ are contradictory. Hence, $\alpha_i(t) \nrightarrow -\infty \text{ for } i \in [1;n]$.\\
    Therefore, in both cases we reach the same conclusion, $\alpha(t), \dot{\alpha}(t) \in \Ce$.
\end{proof}

\subsection{Controller Design}\label{sec:control}
In this section, utilizing the adaptive funnel, discussed in the previous section, we derive the funnel control law to solve Problem \ref{prob1}. The controller design is done in three stages.

\textbf{Stage I.} Given an initial state $x(0)$ and target state $\T$, construct the funnel constraints \eqref{eqn:ppc} that guide the system trajectory to the target $\T$, as discussed in Section \ref{sec:reach}.

\textbf{Stage II.} Given the unsafe region $\U$, obtain $\Obs^{\hat{j}}$ as shown in (\ref{eqn:firstobs}) and compute the circumvent function according to (\ref{eqn:bump1}) or (\ref{eqn:bump2}). Now modify the funnel around the circumvent function, as discussed in Section \ref{Funnel}, and determine the adaptive funnel framework, defined by $\gamma_{L}$ and $\gamma_{U}$.

\textbf{Stage III.} For the modified funnel, we define the normalized error as
\begin{align}
    \hat{e}(x,t) = 2\gamma_d(t)^{-1} \left(x-\frac{1}{2}\gamma_s(t)\right). \label{eqn:e_ras}
\end{align}

The corresponding constrained region $\hat{\mathbb{D}}$ can be represented by:
$\hat{\mathbb{D}} := \{\hat{e}(x,t) : \hat{e}(x,t) \in (-1, 1)^n\}$. 
The transformed error is defined as:
\begin{align}
    &\hat{\varepsilon}(x,t) = y(\hat{e}(x,t)) \nonumber \\
                    &=\hspace{-0.4em} \col \hspace{-0.2em}\left(\hspace{-0.2em} \ln\left(\frac{1\hspace{-0.2em}+\hspace{-0.2em}\hat{e}_1(x_1,t)}{1\hspace{-0.2em}-\hspace{-0.2em}\hat{e}_1(x_1,t)}\right),\ldots,\ln\left(\frac{1\hspace{-0.2em}+\hspace{-0.2em}\hat{e}_n(x_n,t)}{1\hspace{-0.2em}-\hspace{-0.2em}\hat{e}_n(x_n,t)}\right)\hspace{-0.2em} \right). \label{eqn:eps_ras}
\end{align}
We also define a diagonal matrix, $\hat{\xi}(x,t)$, as
\begin{align}
    \hat{\xi}(x,t) = \frac{4 \gamma_d^{-1}}{(1-\hat{e}^{T}(x,t)\hat{e}(x,t))}. \label{eqn:xi_ras}
\end{align}

Now, in Theorem \ref{theorem_ras}, we propose a control strategy $\hat{u}(x,t)$ such that the state trajectory satisfies reach-avoid-stay specifications.

\begin{theorem} \label{theorem_ras}
    Consider a nonlinear control-affine system $\mathcal{S}$ given in (\ref{eqn:sysdyn}), assigned a reach-avoid task expressed mathematically through (\ref{eqn:perform}) and (\ref{eqn:bump1}, \ref{eqn:bump2}) respectively. If the initial state $x(0)$ is within the modified funnel (Section \ref{Funnel}), then the control strategy
    \begin{multline} \label{eqn:Control_ras}
        \hat{u}(x,t) = -g(x)^T(g(x) g(x)^T)^{-1} \\
        \left(\hat{k}\hat{\xi}(x,t) \hat{\varepsilon}(x,t) - \frac{1}{2}\dot{\gamma}_d(t) \hat{e}(x,t) \right).
    \end{multline}
    will drive the state trajectory $x(t)$ to the target $\T$ while avoiding the unsafe set $\Obs^{\hat{j}}$ (\ref{eqn:firstobs}) and adhering to state constraints, i.e., $\exists t \in \R_0^+ : x(t) \in \T$ and $\forall t \in \R_0^+, x(t) \notin \Obs^{\hat{j}} \text{ and } x(t) \in \X$. Here, $\hat{k}$ is any positive constant, $\hat{e}(x,t)$, $\hat{\varepsilon}(x,t)$, and $\hat{\xi}(x,t)$ are defined in (\ref{eqn:e_ras}), (\ref{eqn:eps_ras}), and (\ref{eqn:xi_ras}), respectively.
\end{theorem}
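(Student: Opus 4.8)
The plan is to adapt the standard prescribed-performance (funnel) control argument to the obstacle-modified funnel $\gamma_L,\gamma_U$, and then convert confinement inside that funnel into the three pieces of the specification. The workhorse is the transformed error $\hat{\varepsilon}$ of \eqref{eqn:eps_ras}: since $y(\cdot)=\ln\frac{1+\cdot}{1-\cdot}$ diverges as its argument tends to $\pm1$, boundedness of $\hat{\varepsilon}(x,t)$ is \emph{equivalent} to keeping every $\hat{e}_i$ strictly inside $(-1,1)$, i.e. to keeping $x(t)$ strictly between $\gamma_{i,L}$ and $\gamma_{i,U}$. So the theorem reduces to showing $\hat{\varepsilon}$ stays bounded for all $t\ge0$. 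For well-posedness I would first invoke Lemma \ref{lem:funnel} (so $\gamma_s,\dot{\gamma}_s,\gamma_d,\dot{\gamma}_d\in\Ce$) together with the fact that the adaptive law \eqref{eqn:adap} keeps the width $\gamma_d$ strictly positive, so that $\gamma_d^{-1},\hat{e},\hat{\varepsilon},\hat{\xi}$ are well defined; using Assumption \ref{assum:lip} and invertibility of $g g^T$, the closed-loop field $f(x)+g(x)\hat{u}(x,t)$ is locally Lipschitz on the open tube $\hat{\mathbb{D}}=\{(x,t):\hat{e}(x,t)\in(-1,1)^n\}$. Since $x(0)$ lies in the modified funnel, $\hat{e}(x(0),0)\in(-1,1)^n$, and Picard--Lindel\"of gives a unique maximal solution on $[0,\tau_{\max})$ remaining in $\hat{\mathbb{D}}$.

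The analytic core is a Lyapunov estimate with $V=\tfrac12\hat{\varepsilon}^T\hat{\varepsilon}$. Differentiating \eqref{eqn:e_ras} gives $\dot{\hat{e}}=\gamma_d^{-1}\big(2\dot{x}-\dot{\gamma}_s-\dot{\gamma}_d\hat{e}\big)$, and the Jacobian of $y$ combines with $\hat{\xi}$ of \eqref{eqn:xi_ras} to yield $\dot{\hat{\varepsilon}}=\tfrac12\hat{\xi}\big(2\dot{x}-\dot{\gamma}_s-\dot{\gamma}_d\hat{e}\big)$. Substituting $\dot{x}=f(x)+g(x)\hat{u}$ with the control law \eqref{eqn:Control_ras}, the feedback cancels $g g^T(g g^T)^{-1}$ and the $\tfrac12\dot{\gamma}_d\hat{e}$ term, leaving $\dot{\hat{\varepsilon}}=\hat{\xi}\big(f-\tfrac12\dot{\gamma}_s\big)-\hat{k}\,\hat{\xi}^2\hat{\varepsilon}$, so that $\dot{V}\le-\hat{k}\norm{\hat{\xi}\hat{\varepsilon}}^2+\bar{F}\norm{\hat{\xi}\hat{\varepsilon}}$, where $\bar{F}$ bounds $\norm{f-\tfrac12\dot{\gamma}_s}$ on the closure of the tube ($f$ continuous on the compact state box, $\dot{\gamma}_s$ bounded by Lemma \ref{lem:funnel}). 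Hence $\dot{V}<0$ whenever $\norm{\hat{\xi}\hat{\varepsilon}}>\bar{F}/\hat{k}$; since $\hat{\xi}\succeq(4/\bar{\gamma}_d)I$ with $\bar{\gamma}_d=\sup_t\max_i\gamma_{i,d}<\infty$, this confines $\hat{\varepsilon}$ to a fixed compact set on $[0,\tau_{\max})$. Bounded $\hat{\varepsilon}$ keeps $\hat{e}$ in a compact subset of $(-1,1)^n$, so the solution never approaches $\partial\hat{\mathbb{D}}$; the standard maximal-interval extension argument then forces $\tau_{\max}=\infty$, and $x(t)$ stays strictly inside the modified funnel for all $t\ge0$.

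It remains to cash in this confinement. For \emph{stay}, $\underline{\X}_i\le\gamma_{i,L}<x_i(t)<\gamma_{i,U}\le\overline{\X}_i$ by the construction of $\rho_{i,L},\rho_{i,U}$ in \eqref{eqn:ppc} and the clipping of $\beta^{\hat{j}}$ at the state-space bounds, so $x(t)\in\X$. For \emph{avoid}, note $x(t)\in\Obs^{\hat{j}}$ requires $x_i(t)\in[\underline{\Obs}^{\hat{j}}_i,\overline{\Obs}^{\hat{j}}_i]$ for \emph{every} $i$; in the chosen dimension $i^{\hat{j}}$ the boundary is raised by $\beta^{\hat{j}}$ of \eqref{eqn:bump1} above $\overline{\Obs}^{\hat{j}}_{i^{\hat{j}}}+\delta B$ (or lowered below $\underline{\Obs}^{\hat{j}}_{i^{\hat{j}}}-\delta B$ via \eqref{eqn:bump2}) throughout $T_{act}\supseteq[\underline{t}^{\hat{j}},\overline{t}^{\hat{j}}]$, forcing $x_{i^{\hat{j}}}(t)\notin[\underline{\Obs}^{\hat{j}}_{i^{\hat{j}}},\overline{\Obs}^{\hat{j}}_{i^{\hat{j}}}]$ on $T_{act}$, while outside $T_{act}$ the very definition of $\underline{t}^{\hat{j}},\overline{t}^{\hat{j}}$ leaves no contact; hence $x(t)\notin\Obs^{\hat{j}}$ for all $t$. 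For \emph{reach}, after $\overline{t}^{\hat{j}}+\delta t$ the circumvent term returns to the state bound and $\alpha$ decays exponentially by \eqref{eqn:adap}, so $\gamma_L,\gamma_U$ collapse onto the reachability funnel, whose box $\prod_i(\eta_i+[-\underline{c}_i\rho_i(t),\overline{c}_i\rho_i(t)])$ shrinks into $int(\T)$; thus at some finite $t$ the whole funnel box lies in $\T$, and since $x(t)$ lies in that box, $x(t)\in\T$.

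I expect the genuine obstacle to be the avoidance step rather than the Lyapunov estimate. One must verify that the interplay between the fixed window $T_{act}$ (computed from the \emph{unmodified} reachability trajectory) and the actually realized, funnel-perturbed trajectory still guarantees that $x_{i^{\hat{j}}}(t)$ lies outside the obstacle's projection for the entire duration of possible contact. A secondary but equally load-bearing point is confirming that the adaptive law \eqref{eqn:adap} preserves a strictly positive width $\gamma_d>0$, since $\gamma_d^{-1}$ and $\hat{\xi}$ — and therefore the whole Lyapunov construction — blow up otherwise; this is precisely what the $\psi_i+\alpha_i\nrightarrow0$ conclusion of Lemma \ref{lem:funnel} is meant to underwrite.
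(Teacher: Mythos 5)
Your proposal follows essentially the same route as the paper's proof: the identical three-step funnel argument (existence of a maximal solution on $[0,\tau_{\max})$ via a Lipschitz/ODE existence lemma, Lyapunov confinement of $\hat{\varepsilon}$ using $V=\tfrac{1}{2}\hat{\varepsilon}^T\hat{\varepsilon}$ to trap $\hat{e}$ in a compact subset of $\hat{\mathbb{D}}$, and a contradiction argument extending $\tau_{\max}$ to $\infty$), with Lemma \ref{lem:funnel} supplying boundedness of $\gamma_s,\dot{\gamma}_s,\gamma_d,\dot{\gamma}_d$. Your closing paragraph cashing the funnel confinement into the explicit reach/avoid/stay conclusions, and your observation that $T_{act}$ is computed from the nominal rather than the realized trajectory, go beyond what the paper's proof makes explicit (it stops at satisfaction of the funnel constraints for all time), but the core argument is the same.
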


\begin{proof}
% For simplicity of paper, we will denote the timed trajectories without explicitly mentioning the t, i.e., replace $x(t)$ with $x$.
The proof comprises three steps. 
First, we show that there exists a maximal solution for the normalized error $\hat{e}(x,t)$, which implies that $\hat{e}(x,t)$ remains within $\hat{\mathbb{D}}$ in the maximal time solution interval $[0, \tau_{\max})$. Next, show that the proposed control law (\ref{eqn:Control_ras}) constraints $\hat{e}(x,t)$ to a compact subset of $\hat{\mathbb{D}}$. Finally, prove that $\tau_{\max}$ can be extended to $\infty$.

Before proceeding let us introduce two lemmas:

\begin{lemma}\cite[Theorem 54]{sontag} \label{lem:54}
    Consider the IVP $\dot{y} = H(y,t), y(0) \in \mathbb{D}_y$.
    Assume $H:\mathbb{D}_y \times \mathbb{R}_{>0} \rightarrow \mathbb{R}$ is
    \begin{enumerate}
        \item locally Lipschitz on $y$, for each $t \in \mathbb{R}_{>0}$
        \item piecewise continuous on $t$ for each fixed $y \in \mathbb{D}_y$
    \end{enumerate}
    Then there exists a unique and maximal solution
    $y:[0, \tau_{\max}) \rightarrow \mathbb{D}_y,$
    where $\tau_{\max} \in \mathbb{R}_{>0} \cup \infty$.
\end{lemma}

\begin{lemma}\cite[Proposition C.3.6]{sontag} \label{lem:c.3.6}
    Consider all the assumptions of Lemma \ref{lem:54} to hold true.
    For a maximal solution $y$ on $[0, \tau_{\max})$ with $\tau_{\max} < \infty$ and for any compact set $\mathbb{D}'_y \in \mathbb{D}_y$, 
    $\exists t' \in [0, \tau_{\max}) \text{, such that } y(t') \notin \mathbb{D}_y.$
\end{lemma}

Continuing with the proof.

\textbf{Step 1.} Taking derivatives of (\ref{eqn:e_ras}) and (\ref{eqn:eps_ras}), we have:
\begin{align}
    \dot{\hat{e}} = 2\gamma_d^{-1} \left( \dot{x} - \frac{1}{2}\dot{\gamma}_s - \frac{1}{2}\dot{\gamma}_d \hat{e} \right), \label{eqn:edot} \text{ and }
    \dot{\hat{\varepsilon}} = \frac{2}{1-\hat{e}^T\hat{e}}\dot{\hat{e}}.
\end{align}

Substituting the controller (\ref{eqn:Control_ras}) in the system dynamics (\ref{eqn:sysdyn}), we obtain the closed-loop dynamics:
\begin{align*}
    \dot{x} &= H_1(x,\hat{e},t) := f(x) + \left(-\hat{k} \hat{\xi} \hat{\varepsilon} - \frac{1}{2} \dot{\gamma}_d \hat{e} \right) \nonumber 
            %&= f(x) + \left(-\hat{k} \frac{4 \gamma_d^{-1}}{1-\hat{e}^T\hat{e}} \text{ln} \left( \frac{1+\hat{e}}{1-\hat{e}} \right) - \frac{1}{2} \dot{\gamma}_d \hat{e} \right) \label{eqn:H1}
\end{align*}
and substituting the above equation in $\dot{\hat{e}}$, we obtain
\begin{align*}
    \dot{\hat{e}} = H_2(x,\hat{e},t) := 2\gamma_d^{-1} \left( H_1(x,\hat{e},t) - \frac{1}{2}\dot{\gamma}_s - \frac{1}{2}\dot{\gamma}_d \hat{e} \right).
\end{align*}
Consider the augmented state $y$ and its derivative $\dot{y}$ as
\begin{align*}
    y = \begin{bmatrix}
x\\
\hat{e}
\end{bmatrix} \text{, } 
\dot{y} = H(y,t) := \begin{bmatrix}
H_1(x,\hat{e},t)\\
H_2(x,\hat{e},t)
\end{bmatrix}.
\end{align*} 

Since the initial state $x(0)$ is within the updated funnel, the initial normalized error $\hat{e}(x(0), 0)$ is within the constrained region $\hat{\mathbb{D}}$. Note, that $\hat{\mathbb{D}}$ is an open and bounded set. Further, define $\hat{\mathbb{D}}_x := \{x \in \mathbb{R}^n | \hat{e}(x(0),0) \in \hat{\mathbb{D}}\}$, which is a non-empty open and bounded set. Thus, $\hat{\mathbb{D}}_y := \hat{\mathbb{D}}_x \times \hat{\mathbb{D}}$ is also a non-empty open and bounded set and the initial condition of the augmented state satisfy $y(0) = \begin{bmatrix}
x(0)\\
\hat{e}(x(0),0)
\end{bmatrix} \in \hat{\mathbb{D}}_y$.
Therefore, we have the following initial value problem at hand:
    $\dot{y} = H(y,t), y(0) \in \hat{\mathbb{D}}_y$.

We can see that $\hat{\varepsilon}$ (\ref{eqn:eps_ras}), $\hat{\xi}$ (\ref{eqn:xi_ras}) and $\dot{\gamma}_d \hat{e}$, defined on $\hat{\mathbb{D}}_y$, are locally Lipschitz continuous in $\hat{e}$. Further, according to Assumption \ref{assum:lip}, $f(x)$ is also Lipschitz continuous on $\hat{\mathbb{D}}_y$ in $x$. Therefore, we can conclude that $H(y,t)$ is locally Lipschitz continuous on $\hat{\mathbb{D}}_y$ in $y$.

Hence, according to Lemma \ref{lem:54}, there exists a maximal solution of the IVP $\dot{y} = H(y,t), y(0) \in \hat{\mathbb{D}}_y$ in the time interval $[0, \tau_{\max})$:
%\begin{align}
    $y(t) \in \hat{\mathbb{D}}_y \forall t \in [0, \tau_{\max}).$ %\nonumber
%\end{align}

%%%%%%%%%%%%%%%%%%%%%%%%%%%%%%%%%%%%%%%%%%%%%%%%%%%%%%%%%%%%%%%%%%%%%%

\textbf{Step 2.} Based on Step 1, we know
\begin{align*}
             &y(t) \in \hat{\mathbb{D}}_y, \forall t \in [0, \tau_{\max}) \\
    \implies &\hat{e}(t) \in \hat{\mathbb{D}}, \forall t \in [0, \tau_{\max}) \\
    \implies &\gamma_L(t) < x(t) < \gamma_U(t), \forall t \in [0, \tau_{\max}).
\end{align*}

Consider the following positive definite and radially unbounded Lyapunov function candidate: $V = \frac{1}{2}\hat{\varepsilon}^T\hat{\varepsilon}$. 

Differentiating $V$ with respect to time $t$ and substituting $\dot{\hat{\varepsilon}}$, $\dot{\hat{e}}$ and system dynamics \eqref{eqn:sysdyn}, we obtain:
\begin{align*}
    \dot{V} &= \hat{\varepsilon}^T \dot{\hat{\varepsilon}}= \hat{\varepsilon}^T \frac{2}{1-\hat{e}^T\hat{e}}\dot{\hat{e}} = \hat{\varepsilon}^T \hat{\xi} \left(\dot{x} - \frac{1}{2}(\dot{\gamma}_{s}-\dot{\gamma}_{d}\hat{e})\right) \\
            &= \hat{\varepsilon}^T \hat{\xi} \left( f(x) + g(x)u  - \frac{1}{2}(\dot{\gamma}_{s}-\dot{\gamma}_{d}\hat{e}) \right).
\end{align*}
Now employ the control strategy (\ref{eqn:Control_ras}), we get
\begin{align*}
    \dot{V} &= \hat{\varepsilon}^T \hat{\xi} \left( f(x) + \left(-k \hat{\xi} \hat{\varepsilon} - \frac{1}{2}\dot{\gamma}_d \hat{e}\right)  - \frac{1}{2}(\dot{\gamma}_{s}-\dot{\gamma}_{d} \hat{e}) \right) \\
            &= \hat{\varepsilon}^T \hat{\xi} \left( -k \hat{\xi} \hat{\varepsilon} + \left( f(x) - \frac{1}{2}\dot{\gamma}_{s} \right) \right)\\ &\leq \norm{\hat{\varepsilon}^T \hat{\xi} \left( -k \hat{\xi} \hat{\varepsilon} + \left( f(x) - \frac{1}{2}\dot{\gamma}_{s} \right) \right)} \\
            &\leq -k\|\hat{\varepsilon}\|^2 \|\hat{\xi}\|^2 + \|\hat{\varepsilon}\| \|\hat{\xi}\| \|\hat{\Phi}\|,
\end{align*}
where $\hat{\Phi} := f(x) - \frac{1}{2}\dot{\gamma}_s$. We will look at the boundedness of the two terms in $\hat{\Phi}$ separately. First, we know $f(x)$ is a continuous function of $x$ and $x \in \hat{\mathbb{D}}_x, \forall t \in [0, \tau_{\max})$, an open and bounded set. Thus, by applying the extreme value theorem, we can infer $\|f(x)\| < \infty$. 
%Secondly, by definition, $w$ is bounded. 
Finally, from Lemma \ref{lem:funnel} we know that $\dot{\gamma}_s$ is also bounded. Hence, $\hat{\Phi} \in \Ce, \forall t \in [0, \tau_{max}]$.

Now add and substract $k\hat{\theta}\norm{\hat{\varepsilon}}^2 \|\hat{\xi}\|^2$, where $0<\hat{\theta}<1$
\begin{align*}
    \dot{V} &\leq -k(1-\hat{\theta})\norm{\hat{\varepsilon}}^2 \|\hat{\xi}\|^2 - \norm{\hat{\varepsilon}}\|\hat{\xi}\| \left(k \hat{\theta} \norm{\hat{\varepsilon}}\|\hat{\xi}\| - \|\hat{\Phi}\| \right) \\
            &\leq -k(1-\hat{\theta})\norm{\hat{\varepsilon}}^2 \norm{\hat{\xi}}^2, \forall k \hat{\theta} \norm{\hat{\varepsilon}}\|\hat{\xi}\| - \|\hat{\Phi}\| \geq 0 \\
            &\leq -k(1-\hat{\theta})\norm{\hat{\varepsilon}}^2 \|\hat{\xi}\|^2, \forall \norm{\hat{\varepsilon}} \geq \frac{\|\hat{\Phi}\|}{k \hat{\theta} \|\hat{\xi}\|} \nonumber, \forall t \in [0, \tau_{\max}).
\end{align*}
Therefore, we can conclude that there exists a time-independent upper bound $\hat{\varepsilon}^* \in \mathbb{R}_{0}^+$ to the transformed error $\hat{\varepsilon}$, i.e., $\|\hat{\varepsilon}\| \leq \hat{\varepsilon}^* \forall t \in [0, \tau_{\max})$.

Further, we know from (\ref{eqn:eps_ras}) that $\hat{\varepsilon}_i = \ln \left( \frac{1+\hat{e}_i}{1-\hat{e}_i} \right)$. Taking inverse, we can bound the normalized error $\hat{e}(x,t) = \col(\hat{e}_1, \ldots, \hat{e}_n)$ as:
\begin{align*}
    -1 < \frac{\hat{e}_i^{-\hat{\varepsilon}_i^*}-1}{\hat{e}_i^{-\hat{\varepsilon}_i^*}+1} =: \hat{e}_{i,L} \leq \hat{e}_i \leq \hat{e}_{i,U} := \frac{\hat{e}_i^{\hat{\varepsilon}_i^*}-1}{\hat{e}_i^{\hat{\varepsilon}_i^*}+1} < 1 \nonumber \\
    \forall t \in [0, \tau_{\max}), \ \text{for } i \in [1;n].
\end{align*}
Therefore, by employing the control law (\ref{eqn:Control_ras}), we can constrain $\hat{e}$ to a compact subset of $\hat{\mathbb{D}}$ as:
\begin{align} \label{eqn:e_compact}
    \hat{e}(x,t) \in [\hat{e}_L, \hat{e}_U] =: \hat{\mathbb{D}}' \subset \hat{\mathbb{D}}, \forall t \in [0, \tau_{\max}),
\end{align}
where, $\hat{e}_L = \col(\hat{e}_{1,L}, \ldots, \hat{e}_{n,L})$ and $\hat{e}_U = \col(\hat{e}_{1,U}, \ldots, \hat{e}_{n,U})$
%%%%%%%%%%%%%%%%%%%%%%%%%%%%%%%%%%%%%%%%%%%%%%%%%%%%%%%%%%%%%%%%%%%%%%

\textbf{Step 3.} Finally, we prove that $\tau_{\max}$ can be extended to $\infty$. 

We know that $\hat{e}(x,t) \in \hat{\mathbb{D}}', \forall t \in [0, \tau_{\max})$, where $\hat{\mathbb{D}}'$ is a non-empty compact subset of $\hat{\mathbb{D}}$.

Consequently we can conclude that $x(t) = \frac{1}{2}\gamma_d \hat{e} + \gamma_s$ also evolves in a compact set:
\begin{align}
    x(t) \in \hat{\mathbb{D}}'_x \subset \hat{\mathbb{D}}_x, \forall t \in [0, \tau_{\max}).
\end{align}
Define the compact set $\hat{\mathbb{D}}'_y := \hat{\mathbb{D}}'_x \times \hat{\mathbb{D}}'$ and note that $\hat{\mathbb{D}}'_y \subset \hat{\mathbb{D}}$. Therefore, there is no $t \in [0, \tau_{\max})$ such that $y(t) \notin \hat{\mathbb{D}}_y$.

However, if $\tau_{\max} < \infty$ then according to Lemma \ref{lem:c.3.6}, $\exists t' \in [0, \tau_{\max})$ such that $y(t) \notin \hat{\mathbb{D}}_y$. This leads to a contradiction!
Hence, we conclude that $\tau_{\max}$ can be extended to $\infty$, i.e., $x(t)$ satisfies the funnel constraints in (\ref{eqn:ppc}) $\forall t \geq 0$.

In conclusion, the satisfaction of \eqref{eqn:ppc} is guaranteed for all time when we employ the control strategy (\ref{eqn:Control_ras}).
\end{proof}

\begin{remark}
    From Assumption \ref{assum:lip}, we know $g(x)g^T(x)$ is invertible. (\ref{eqn:e_compact}) entails that $\hat{e}$ is bounded. And by definitions \eqref{eqn:e_ras} and \eqref{eqn:xi_ras}, $\hat{\xi}$ and $\hat{\varepsilon}$ are also bounded. Further, from Lemma \ref{lem:funnel}, $\dot{\gamma}_d$ also $\in \Ce$. Finally, all the non-smooth functions in the revamped funnel design in Section \ref{Funnel} are replaced by their smooth approximations.
    Hence, the control law $\hat{u}(x,t)$ (\ref{eqn:Control_ras}) is well-defined, i.e., continuous, smooth, and bounded.
\end{remark}

\begin{remark}
    The structure of the controller defined in Theorem \ref{thm:control} \eqref{eqn:Control_rs} is the same as that in \eqref{eqn:Control_ras}, only with the modified funnel constraints.
\end{remark}

In Figure \ref{fig:3d}, we present a 3D visualization of a scenario where $\X \subset \R^2$ and the modified funnel circumvents around the unsafe set $\U$, providing a safe path for the trajectory to reach the target $\T \subset \X$ while staying clear of $\U$.

\begin{figure}[h]
    \centering
    \includegraphics[width=0.5\textwidth]{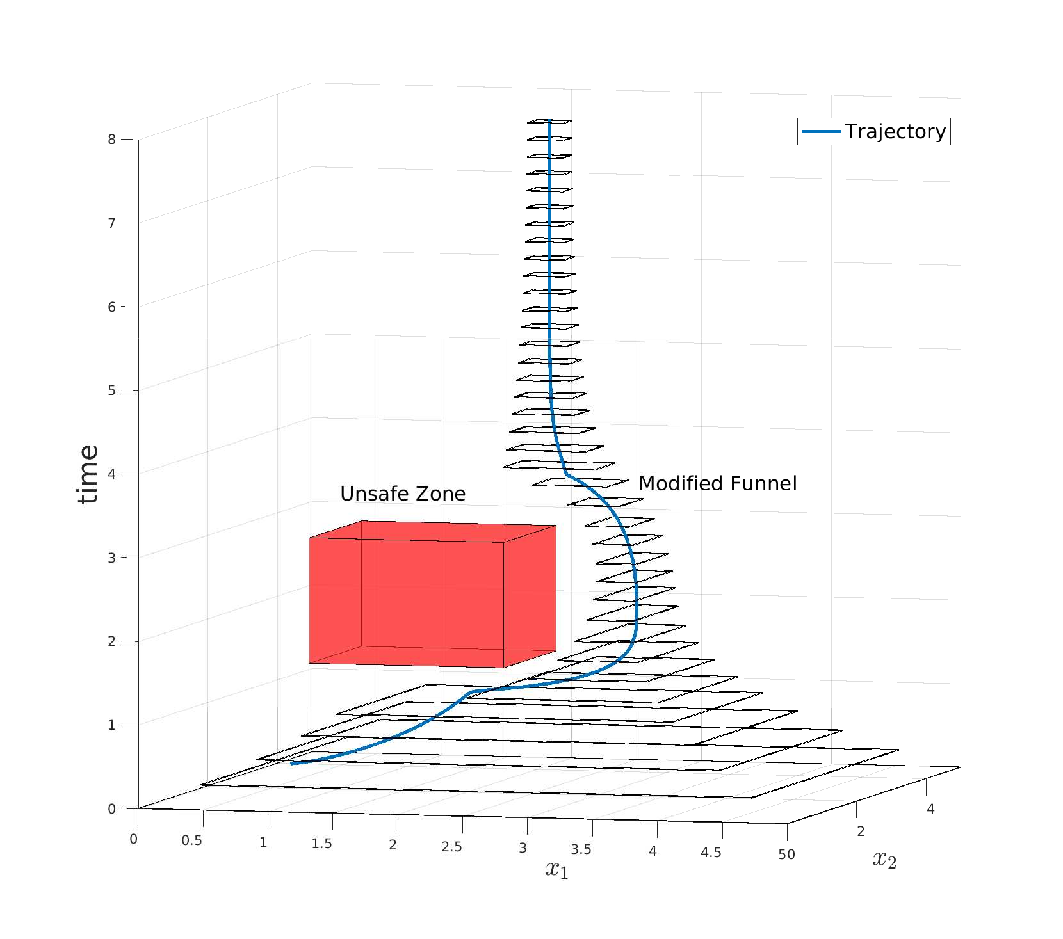}
    \vspace{-0.8cm}
    \caption{3D visualization.}
    \label{fig:3d}
\end{figure}

\begin{figure*}[t]
    \centering
    \includegraphics[width=\textwidth]{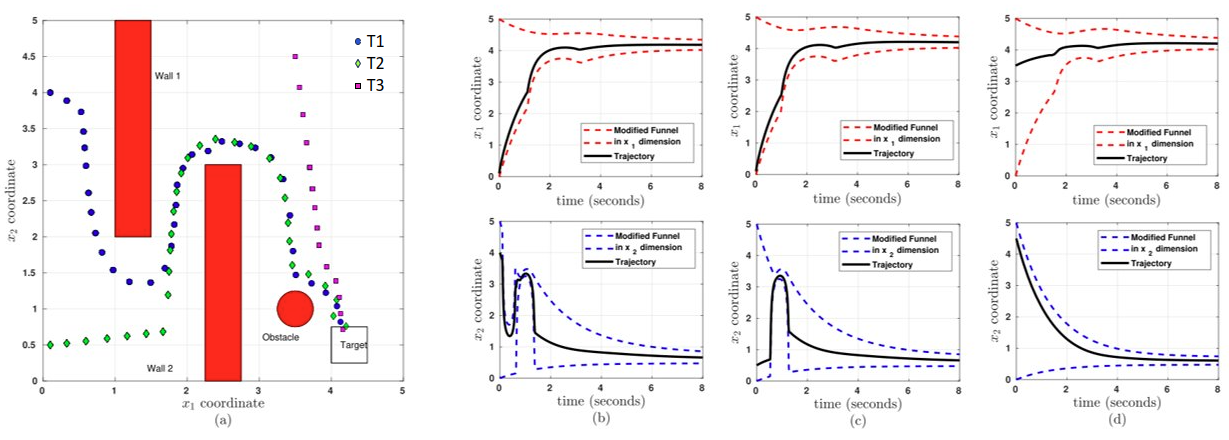}
    \vspace{-0.8cm}
    \caption{(a) Starting from three different initial states, we have three different trajectories: T1 (blue circle), T2 (green diamond), and T3 (magenta square). Adaptive funnel framework with controlled system trajectories for (b) T1, (c) T2, and (d) T3.}
    \label{fig:sim}
\end{figure*}

\section{Extension to Tackle General Unsafe sets} \label{sec:multiobs}
Given the unsafe region $\U$ with $n_u$ connected convex sets, we first choose $\Obs^{\hat{j}}$ (\ref{eqn:firstobs}). The control law $\hat{u}(x,t)$ (\ref{eqn:Control_ras}) ensures that the controlled state trajectory reaches the target while avoiding this $\Obs^{\hat{j}}$. Now, to address $\Obs^j$ for $j={1,2,\ldots,n_u}$, we iterate through this procedure until the controlled system trajectory stays entirely clear of the unsafe region $\U$. 

Further, in each iteration, for defining the $\beta$ function, we have a certain degree of randomness. We randomly select $i^{\hat{j}}$ from all the possible alternatives (\ref{eqn:firstdim}). Moreover, we also randomly choose whether to introduce $\beta_i^{\hat{j}}$ in the lower constraint (\ref{eqn:bump1}) or upper constraint (\ref{eqn:bump2}), as discussed in Section \ref{sec:bump}. This randomness allows exploration of all the paths around unsafe regions, thus resulting in a higher probability of obtaining a closed-form controller satisfying reach-avoid-stay specifications for complex environments. The algorithm is presented in Algorithm 1. 

\vspace{0.6cm}
\hrule
\vspace{0.1cm}
\hspace{-0.45cm} \textbf{Algorithm 1} Extension for general unsafe region
\vspace{0.1cm}
\hrule
\vspace{0.2cm}
\hspace{-0.45cm} \textbf{Input:} $\X, \T, \U=\{\Obs_1, \Obs_2, \ldots, \Obs_{n_u}\}, x(0)$ \\
\textbf{Output:} $\hat{u}(x(0),\X,\T,\U,x,t): \{ \exists \tau \in \R^+_0: x(\tau) \in \T \text{ and } \forall t \in \R^+_0: x(t) \in \X, x(t) \cap \U = \emptyset \}$
\begin{enumerate}
    \item[1.] Given $\T$, choose $\eta \in int(\T)$ and construct funnel constraints to enforce reachability (\ref{eqn:ppc})
    \item[2.] Apply control law $u(x,t)$ (\ref{eqn:Control_rs}) to drive the controlled trajectory $x(t)$ to the target while remaining within the state limits.
    \item[3.] \textbf{while} true \textbf{do}
    \item[4.] \begin{itemize}
      \item[] Obtain obstacle $\Obs_{\hat{j}} \in \U$ (\ref{eqn:firstobs}) and introduce the circumvent function $\beta$ (\ref{eqn:bump1}) or (\ref{eqn:bump2}) to modify the funnel around the obstacle as discussed in Section \ref{Funnel}.
    \end{itemize}
    \item[5.] \begin{itemize}
      \item[] Apply control law $\hat{u}(x,t)$ \eqref{eqn:Control_ras} and obtain the controlled trajectory $x_u(t)$.
    \end{itemize}
    \item[6.] \begin{itemize}
      \item[] \textbf{if} ($x_u(t) \cap \Obs_{j} = \emptyset, \forall j \in [1;n_u]$)
    \end{itemize}
    \item[7.] \begin{itemize}
      \item[] \hspace{0.4cm} \textbf{return} $\hat{u}(x,t)$
    \end{itemize}
    \item[8.] \begin{itemize}
      \item[] \textbf{end}
    \end{itemize}
    \item[9.] \textbf{end}
\end{enumerate}
\vspace{0.1cm}
\hrule

\vspace{0.5cm}

\begin{corollary}
    Thus, given a system $\mathcal{S}$ in (\ref{eqn:sysdyn}), target set $\T$ in the state space $\X$ and unsafe region $\U$, termination of the Algorithm 1 defines an adaptive funnel framework and provides us a well-defined closed-form control law (\ref{eqn:Control_ras}) that will guide the system trajectory to the target while avoiding the unsafe region, enforcing reach-avoid-stay specifications.
\end{corollary}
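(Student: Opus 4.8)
The plan is to reduce the corollary to a finite aggregation of the single-obstacle guarantee in Theorem \ref{theorem_ras}, using the explicit termination test of Algorithm 1 to upgrade single-obstacle avoidance into avoidance of the whole unsafe region $\U$. Since the statement is conditioned on termination, I would not attempt to prove that the \texttt{while} loop halts; instead I would assume the loop exits through the \textbf{return} in step 7 of Algorithm 1 and show that the control law (\ref{eqn:Control_ras}) returned at that point enforces the reach-avoid-stay specification of Problem \ref{prob1}.

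First I would establish that every funnel configuration generated during the iteration is admissible, i.e.\ $\gamma_L(t) < \gamma_U(t)$ for all $t$, that the initial state lies inside it, and that $\gamma_s, \gamma_d$ together with their derivatives are continuous and bounded. Continuity and boundedness follow directly from Lemma \ref{lem:funnel}; admissibility and the initial-condition property follow from the construction in Section \ref{Funnel}, where each circumvent function $\beta^{\hat{j}}$ is active only on the compact window $T_{act}$ and the adaptive update $\alpha(t)$ is driven by (\ref{eqn:adap}) to enlarge the opposite boundary precisely when a conflict of tolerance $\mu$ arises. This guarantees that the hypotheses of Theorem \ref{theorem_ras} are met at each iteration, so that the associated control law confines the trajectory to the modified funnel for all $t \geq 0$.

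Next I would invoke Theorem \ref{theorem_ras} to extract the reach and stay components. Confinement to the funnel together with the asymptotic property of $\rho_i(t)$ in (\ref{eqn:perform}), noting that each $\beta^{\hat{j}}$ vanishes outside $T_{act}$ and that $\alpha(t)$ decays exponentially back to zero once the conflict is resolved, shows that the funnel asymptotically collapses into $\prod_{i} (\eta_i + [-\underline{c}_i \rho_{i,\infty}, \overline{c}_i \rho_{i,\infty}]) \subset \T$, so the reach requirement $\exists t : x(t) \in \T$ survives every admissible modification. Moreover, since $\eta \in int(\T \setminus \U)$ and the funnel boundaries never leave $[\underline{\X}_i, \overline{\X}_i]$ by the choice of $B^{\hat{j}}$ in (\ref{eqn:bump1})--(\ref{eqn:bump2}), the stay requirement $x(t) \in \X$ holds for all $t$.

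Finally, the avoid requirement is exactly the exit condition of the loop: Algorithm 1 returns $\hat{u}(x,t)$ only when the test in step 6 succeeds, namely $x_u(t) \cap \Obs_j = \emptyset$ for every $j \in [1;n_u]$ and all $t$. Because $\U = \bigcup_{j \in [1;n_u]} \Obs^j$, this is equivalent to $x_u(t) \cap \U = \emptyset$ for all $t$, i.e.\ avoidance of the entire (possibly nonconvex and disconnected) unsafe region. Combining the three properties yields precisely Problem \ref{prob1}. The main obstacle I anticipate is the second paragraph: one must verify that superposing circumvent functions for successive obstacles across iterations never violates the admissibility $\gamma_L < \gamma_U$ nor destroys the asymptotic convergence into $\T$ on which Theorem \ref{theorem_ras} relies. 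This is where the time-localized support of each $\beta^{\hat{j}}$ and the exponential decay of $\alpha(t)$ from Lemma \ref{lem:funnel} do the essential work, reducing the multi-obstacle claim to a finite sequence of single-obstacle guarantees that the termination test aggregates into avoidance of $\U$.
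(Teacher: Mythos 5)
The paper offers no proof of this corollary at all---it is stated as an immediate consequence of Theorem \ref{theorem_ras} together with the termination test in step 6 of Algorithm 1---and your argument is exactly that intended reading: conditioning on termination, the step-6 test certifies avoidance of every $\Obs^j$ and hence of $\U = \bigcup_j \Obs^j$, while Theorem \ref{theorem_ras} supplies confinement to the final modified funnel, whose collapse onto $\eta \in int(\T)$ gives reach and whose containment in the state bounds gives stay. Your proposal is therefore correct and follows essentially the same route, and is if anything more careful than the paper in flagging that admissibility ($\gamma_L < \gamma_U$) and the asymptotic collapse into $\T$ must survive the superposition of circumvent functions across iterations, a point the paper leaves implicit.
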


% \begin{remark}
%     Although termination of the loop is not guaranteed, it has been proven that if the loop terminates, then the reach-avoid-stay specification will be necessarily satisfied.
% \end{remark}

A simulation study illustrating the efficacy of the algorithm in solving reach-avoid-stay specifications in a multi-obstacle environment is presented in the next section.

%%%%%%%%%%%%%%%%%%%%%%%%%%%%%%%%%%%%%%%%%%%%%%%%%%%%%%%%%%%%%%%%%%%%%%%%%%%%%%%%
%%--------------------------------NEW SECTION---------------------------------%%
%%%%%%%%%%%%%%%%%%%%%%%%%%%%%%%%%%%%%%%%%%%%%%%%%%%%%%%%%%%%%%%%%%%%%%%%%%%%%%%%
\section{Simulation Results}\label{sec:sim}

Consider a three-wheeled omnidirectional robot operating on a 2-D plane. The Kinematic model of the mobile robot is expressed as:
\begin{align}
    \begin{bmatrix}
        \dot{x} \\ \dot{y} \\ \dot{\theta}
    \end{bmatrix}
    = 
    \begin{bmatrix}
        \cos{\theta} & \sin{\theta} & 0 \\ \sin{\theta} & -\cos{\theta} & 0 \\ 0 & 0 & 1
    \end{bmatrix}
    \begin{bmatrix}
        u \\ v \\ \omega
    \end{bmatrix},
\end{align}
where $(x, y) \text{ and } \theta$ captures the robot's position and orientation respectively. The control inputs, u, v, and $\omega$ are linear velocities in the x and y direction of the robot frame and the angular velocity respectively. Note that the robot dynamics satisfy Assumption \ref{assum:lip}.

We ran the tests for a 2D arena with two wall obstacles and a circular obstacle. The funnel for guiding the robot towards the target is shaped according to \eqref{eqn:perform} with the following parameters: $\rho_{i,0} = 1, \rho_{i,\infty} = 0.05$ and $l_i = 0.7$ for $i \in {1,2}$. $i=1$ and $i=2$ represents the $x_1$ and $x_2$ coordinates respectively. For the circumvent function \eqref{eqn:bump1} or \eqref{eqn:bump2}, we define $k=0.001$, $\delta B = 0$ and $\delta t = 0.1$ for all the obstacles. Finally the adaptive law \eqref{eqn:adap} is established with $\mu = 10$, $\kappa = 0.3$ and $\theta_o = 0.1$.

% \textbf{Task 1.} A three-wheeled omnidirectional robot is initially placed at $x = [1; 2]$ units and is designated the task of reaching the target at $T = [4.25; 3.75]$ units. An obstacle of dimension $1 \text{ unit} \times 1 \text{ unit}$ is placed at $[2.5; 3]$ units. 

% \textbf{Task 2.} In the second task, the same robot is initially placed at $x = [1; 4]$ units and is designated the task of reaching the target at $T = [4.25; 4]$ units. Meanwhile there is a wall of dimension $0.5 \text{ units} \times 2 \text{ units}$ in between at $[2.25, 4]$ units.

% \textbf{Task 3.} Finally, in the third task the same robot is now assigned to dodge multiple obstacles in the arena. The initial position and the target are same as that in Task 1. Two obstacles of dimensions $0.5 \text{ units} \times 0.5 \text{ units}$ are introduced in the arena at $[2.25, 4]$ units and $[2.25, 4]$ units, respectively.

The simulation results with three different initial states are depicted in Figure \ref{fig:sim}. 

% \begin{figure*}[t]
%     \centering
%     \includegraphics[width=\textwidth]{Plots/RA.eps}
%     \caption{Reach target while avoiding an obstacle.}
%     \label{fig:simRA}
% \end{figure*}

% \begin{figure*}[t]
%     \centering
%     \includegraphics[width=\textwidth]{Plots/Wall.eps}
%     \caption{Reach target while avoiding a wall.}
%     \label{fig:simwall}
% \end{figure*}

% \begin{figure*}[t]
%     \centering
%     \includegraphics[width=\textwidth]{Plots/Multi.eps}
%     \caption{Reach target while avoiding multiple obstacles.}
%     \label{fig:simmulti}
% \end{figure*}

%%%%%%%%%%%%%%%%%%%%%%%%%%%%%%%%%%%%%%%%%%%%%%%%%%%%%%%%%%%%%%%%%%%%%%%%%%%%%%%%
%%--------------------------------NEW SECTION---------------------------------%%
%%%%%%%%%%%%%%%%%%%%%%%%%%%%%%%%%%%%%%%%%%%%%%%%%%%%%%%%%%%%%%%%%%%%%%%%%%%%%%%%
\section{Conclusion}\label{sec:conclusion}
In this work, we consider the controller synthesis problem for reach-avoid-stay specification. Given state space constraints, obstacles, and targets, we first proposed the introduction of a circumvent function and construction of an adaptive funnel framework. We have then derived a closed-form control law ensuring that the trajectories of a nonlinear system reach target while avoiding all the unsafe regions and respecting state-space constraints, thus, enforcing reach-avoid-stay specifications. Finally, the efficacy of the proposed approach is demonstrated through simulation studies. 
%Thus, the proposed framework provides a promising solution to address the challenges of achieving robust obstacle avoidance while satisfying prescribed performance requirements in various robotic applications.

%%%%%%%%%%%%%%%%%%%%%%%%%%%%%%%%%%%%%%%%%%%%%%%%%%%%%%%%%%%%%%%%%%%%%%%%%%%%%%%%

\addtolength{\textheight}{-12cm}   % This command serves to balance the column lengths
                                  % on the last page of the document manually. It shortens
                                  % the textheight of the last page by a suitable amount.
                                  % This command does not take effect until the next page
                                  % so it should come on the page before the last. Make
                                  % sure that you do not shorten the textheight too much.

%%%%%%%%%%%%%%%%%%%%%%%%%%%%%%%%%%%%%%%%%%%%%%%%%%%%%%%%%%%%%%%%%%%%%%%%%%%%%%%%

\bibliographystyle{ieeetr} % We choose the "plain" reference style
\bibliography{sources} % Entries are in the refs.bib file

\end{document}